\newtheorem{theorem}{Theorem}
\newtheorem{lemma}{Lemma}
\begin{document}
\title{Performance Analysis of MIMO-HARQ Assisted V2V Communications With Keyhole Effect}
\author{Huan Zhang,
        Zhengtao Liao,
        Zheng Shi,
        Guanghua Yang,
        Qingping Dou,
        and Shaodan Ma
\thanks{\emph{Huan~Zhang and Zhengtao Liao are co-first authors. (Corresponding author: Zheng Shi.)} }  
\thanks{Huan~Zhang and Shaodan~Ma are with the State Key Laboratory of Internet of Things for Smart City and the Department of Electrical and Computer Engineering, University of Macau, Macao 999078, China (e-mails: cquptzh@gmail.com, shaodanma@um.edu.mo).}
\thanks{Zhengtao Liao, Zheng Shi, Guanghua Yang, and Qingping Dou are with the School of Intelligent Systems Science and Engineering, Jinan University, Zhuhai 519070, China (e-mails: {zhengtao@stu2017.jnu.edu.cn}, zhengshi@jnu.edu.cn, ghyang@jnu.edu.cn, tdouqingping@jnu.edu.cn).}  
}
\maketitle
\begin{abstract}
Vehicle-to-vehicle (V2V) communications under dense urban environments usually experience severe keyhole fading effect especially for multi-input multi-output (MIMO) channels, which degrades the capacity and outage performance due to the rank deficiency. To avoid these, the integration of MIMO and hybrid automatic repeat request (HARQ) is proposed to assist V2V communications in this paper. By using the methods of integral transforms, the outage probabilities are derived in closed-form for different HARQ-assisted schemes, including Type I-HARQ, HARQ with chase combining (HARQ-CC), and HARQ with incremental redundancy (HARQ-IR). With the results, meaningful insights are gained by conducting the asymptotic outage analysis. Specifically, it is revealed that full time diversity order can be achieved, while full spatial diversity order is unreachable as compared to MIMO-HARQ systems without keyhole effect. Moreover, we prove that the asymptotic outage probability is a monotonically increasing and convex function of the transmission rate. More importantly, although HARQ-IR performs better than HARQ-CC owing to its higher coding complexity, this advantage becomes negligible in the large-scale array regime. Finally, the numerical results are verified by Monte-Carlo simulations along with some in-depth discussions.
\end{abstract}
\begin{IEEEkeywords}
hybrid automatic repeat request (HARQ), keyhole effect, MIMO, outage probability, V2V communications.
\end{IEEEkeywords}
\IEEEpeerreviewmaketitle
\hyphenation{HARQ}
\section{Introduction}\label{sec:int}
\IEEEPARstart{W}{ith} the rapid development of intelligent transportation systems, vehicle-to-vehicle (V2V) communications have been widely studied in recent years. {\color{blue}Specifically, V2V communications only allow the exchange of information between adjacent vehicles with short distance,  which enhances the transmission reliability, supports delay-sensitive applications, and improves traffic safety \cite{jie2018latency}.} {\color{blue} However, V2V communications considerably differ from mobile cellular communications. On one hand, both transmit and receive vehicles are in motion, which results in more significant Doppler effects and more rapid channel dynamics than cellular communications. Thus, V2V communications usually undergo time-varying fading channels, which lead to frequent estimations of channel state information (CSI). On the other hand, the transceiver antennas are mounted at almost the same height, where the local scatters in the surrounding environment incur more than one multiplicative small-scale fading processes, i.e., cascaded fading \cite{alghorani2020improved}. The local scattering objects include buildings, vehicles, street corners, tunnels, etc., which obstruct the direct link between two vehicles and lead to non-line-of-sight (NLOS) propagation channel condition \cite{matolak2011worse} (cf. Fig. 1). To characterize cascaded fading in V2V communications, the authors in \cite{ali2018throughput,fu2018applying,sen2008vehicle,bithas2017double} proposed to use double-bounce/multiple scattering distributions, such as double-Rayleigh, double-Nakagami-$m$, double-Weibull, and double-generalized Gamma distributions. Moreover, \color{blue}{a decode-and-forward relaying scheme was developed in \cite{bithas2017v2v} for V2V communications by considering double-Nakagami fading, where both the exact and asymptotic outage probabilities were derived.} It was demonstrated experimentally in \cite{matolak2011worse} that the double-bounce scattering distributions can provide an accurate statistical fit for channel modeling of V2V communications. Both experimental and theoretical results verified that the error performance of the multiple scattering model is worse than that of the traditional Rayleigh channel model for cellular communication systems. Therefore, V2V communications often suffer from a more severe fading leading to lower spectral efficiency and reception reliability than cellular communications.}

To boost the spectral efficiency and reliability, multi-input multi-output (MIMO) aided V2V communications have drawn an ever-increasing attention, because multiple antennas can be easily placed on vehicles with large surface \cite{alghorani2017on,fu2016ber,yang2021asymptotic}. {\color{blue}In contrast to single-input single-output (SISO) systems, MIMO systems equipping with multiple antennas are capable of reaping the benefit of spatial multiplexing gain.} Nevertheless, in realistic propagation environments, the performance of MIMO systems is also susceptible to multiple scattering propagation. {\color{blue}Particularly for MIMO assisted V2V (MIMO-V2V) communications, multiplicative fading processes encountered in multiple scattering condition are inevitable due to the mobility of the vehicles and low elevation height of the transceiver antennas. In dense urban environments, all the MIMO-V2V propagation paths travel through the same narrow pipe, which results in the so-called keyhole effect \cite{alghorani2017on}.} {\color{blue}In practice, the keyhole effect is an important and non-negligible characteristic of MIMO-V2V channels that jeopardizes the diversity gain \cite{talha2011channel,ngo2017no}.
The keyhole effect brings about the cascaded fading for MIMO-V2V channels. In particular, the coefficient matrix of the keyhole MIMO channel is expressed as a product of those of the multiple-input single-output (MISO) channel from transmitter to keyhole and the single-input multiple-output (SIMO) channel from keyhole to receiver. This introduces the spatial correlation among MIMO-V2V channels and rank deficiency issue, which degrade the capacity and outage performance of MIMO communications \cite{almers2006keyhole}.} \textcolor{blue}{To investigate the keyhole effect, the ergodic capacity of MIMO systems and the average symbol error rate (SER) of space-time block codes (STBC) were derived in closed-form in \cite{shin2003capacity,shin2003effect}, respectively}. In \cite{sanayei2007antenna}, the system performance for antenna selection under MIMO-keyhole channel was studied. {\color{blue} In \cite{taniguchi2011statistical}, a statistical analysis of signal-to-noise ratio (SNR) was carried out for MIMO keyhole channels by considering double-Rayleigh and double-Nakagami-Rice fadings.} In \cite{ye2020asymptotic}, the authors derived the deterministic approximation for the ergodic rate of a large scale MISO system over keyhole fading channel, where the maximum ratio transmission (MRT) precoding was adopted. Furthermore, the impact of the keyhole channels on satellite communications has been studied in \cite{goswami2019satellite}. Apart from keyhole fading channels in outdoor environment, the authors in \cite{das2018blind} found that furniture, windows and doors can also incur the keyhole effect. Moreover, it was found in \cite{yang2020accurate} that  the keyhole effect also appears in  dual-hop reconfigurable intelligent surface (RIS) aided wireless systems.  All the relevant works, i.e., \cite{matolak2011worse,chizhik2002keyholes,ngo2017no,shin2003capacity,shin2003effect,levin2008on,ye2020asymptotic,goswami2019satellite,
das2018blind,yang2020accurate,sanayei2007antenna}, have demonstrated that keyhole channels not only offset the advantage of spatial diversity of MIMO, but also degrade the spatial multiplexing gain. {\color{blue} Since the keyhole effect negatively impacts the reliability of MIMO-V2V communications, it is of necessity to remedy the performance loss for MIMO-V2V communications.}

To address the above issue, hybrid automatic repeat request (HARQ) is a promising technique to support reliable communications \cite{ding2017on,shi2018cooperative,cai2018performance,shi2019effective,wang2020outage}. Specifically, the essence of HARQ is utilizing both forward error control and automatic repeat request \cite{dahlman20103g}. {\color{blue}As opposed to the adaptive modulation and coding (AMC) scheme that requires perfect instantaneous CSI at the transmitter (CSIT), HARQ only needs the statistical/outdated CSIT by relying on the acknowledgement feedback. As aforementioned, the feature of rapid channel dynamics emerged in V2V communications entails excessive CSIT acquisition overhead. By comparing to AMC, the adoption of HARQ in V2V communications is favorable for overcoming channel uncertainties. Moreover, HARQ facilitates the implementation of MIMO in V2V communications thanks to its neglected signaling overhead.} Based on whether the erroneously received packets are discarded or not, and what types of coding\&decoding strategies are used, HARQ can be further divided into the following three basic schemes, namely, Type-I HARQ, HARQ with chase combining (HARQ-CC) and HARQ with incremental redundancy (HARQ-IR). In particular, Type-I HARQ performs the decoding based on the currently received packet without storing the failed packets. In contrast to Type-I, both HARQ-CC and HARQ-IR schemes save the failed packets and carry out the joint decoding with the subsequent packets by using maximum ratio combining (MRC) and code combining, respectively. Thanks to the outstanding potential of HARQ schemes, they have been widely adopted to assist MIMO communications. For example, in \cite{shen2011hybrid}, by considering both HARQ-CC and HARQ-IR schemes, the fundamental performance limits and linear dispersion code design for the MIMO systems were studied. Additionally, by aiming at the maximization of  energy efficiency, the HARQ-IR assisted massive MIMO systems were investigated in \cite{kim2018energy}. {\color{blue} Nevertheless, the performance of HARQ schemes over keyhole fading channels was seldom reported in the literature except for \cite{ali2018throughput} and \cite{chelli2014performance}. In \cite{ali2018throughput} and \cite{chelli2014performance}, the outage probability, throughput and delay analyses were conducted for the SISO-HARQ-CC and SISO-HARQ-IR schemes over double-Rayleigh fading channels, respectively. It is worth mentioning that keyhole MIMO fading encompasses double-Rayleigh fading as a special case by assuming only a single antenna at the transceiver. In order to extend the application of MIMO to V2V communications, this paper focuses on the performance investigation of MIMO-HARQ systems over keyhole fading channels.}

Since the outage probability is the key performance matric, this paper thoroughly investigates the outage performance of MIMO-HARQ assisted V2V communications with keyhole effect, where three different HARQ schemes are considered. However, the cascaded property of fading channels and the complexity of HARQ schemes significantly challenge the outage analysis. {\color{blue}To the best of our knowledge, this is the first treatise that touches upon MIMO-HARQ communications with keyhole effect, and many helpful physical insights will be extracted by conducting asymptotic outage analysis in the high SNR and large-scale array regimes.} Specifically, the main contributions of this paper can be summarized as follows.
\begin{enumerate}
  \item Closed-form expressions are derived for the outage probabilities of MIMO-HARQ assisted V2V communications with keyhole effect by using integral transforms, e.g., moment generating function (MGF) and Mellin transform. {\color{blue} Besides, the outage expressions of MIMO-HARQ-CC and MIMO-HARQ-IR schemes in our work collapse to those of SISO-HARQ schemes obtained by \cite{ali2018throughput} and \cite{chelli2014performance}, respectively.}
  \item In order to reveal more insights, the asymptotic outage analyses are conducted in this paper. With asymptotic results, it is concluded that full spatial diversity order is unreachable with MIMO, while full time diversity order can be achieved from using HARQ. This obviously justifies the effectiveness of using HARQ to conquer keyhole effect. More specifically, the spatial diversity order is determined by the minimum of the numbers of transmit and receive antennas. Moreover, it is proved that the asymptotic outage probability is a monotonically increasing and convex function with respect to (w.r.t.) transmission rate. This property facilitates the optimal rate selection for practical system design.
  \item More interestedly, it is found that the MIMO-HARQ-CC assisted V2V communications are able to achieve a comparable performance as the MIMO-HARQ-IR assisted ones in the large-scale array regime. This indicates that HARQ-CC is more effective than HARQ-IR for massive MIMO systems due to its lower computational complexity and hardware requirement.
\end{enumerate}

The remainder of this paper is organized as follows. Section \ref{sec:sys_mod} introduces the system model of MIMO-HARQ assisted V2V communications with keyhole effect. In Section \ref{sec:out}, the outage analysis are conducted to obtain the exact and asymptotic expressions of the outage probabilities. Additionally, with the asymptotic results, some profound discussions are undertaken in Section \ref{sec:div}. In Section \ref{sec:num}, numerical results are presented for verifications. Finally, Section \ref{sec:con} concludes this paper.

\emph{Notation}: The following notations will be used throughout this paper. Bold uppercase and lowercase letters are used to denote matrices and vectors, respectively. ${\bf X}^{\mathrm{H}}$, ${\bf X}^{-1}$, ${\mathrm{det}}({\bf X})$ and ${\rm tr}({\bf X})$ stand for the conjugate transpose, the inverse, the  determinant and the trace of matrix ${\bf X}$, respectively. $\mathbf{I}$ represents an identity matrix. $\left\| \cdot \right\|$ denotes the Euclidean norm of a vector. ${\mathcal {CN}}({\bf{0}},{\bf I})$ represents the  complex Gaussian vector with zero mean vector and identity covariance matrix. $\mathbf{A}\succ\mathbf{B}$ means that $\mathbf{A}-\mathbf{B}$ is a positive definite matrix. ${\rm i}=\sqrt{-1}$ denotes the imaginary unit. The symbol ``$\simeq$'' denotes ``asymptotically equal to''.
The definitions of any other notations are deferred to the place where they arise.
\section{System Model}\label{sec:sys_mod}
\begin{figure}
  \centering
  \includegraphics[width=3.5in]{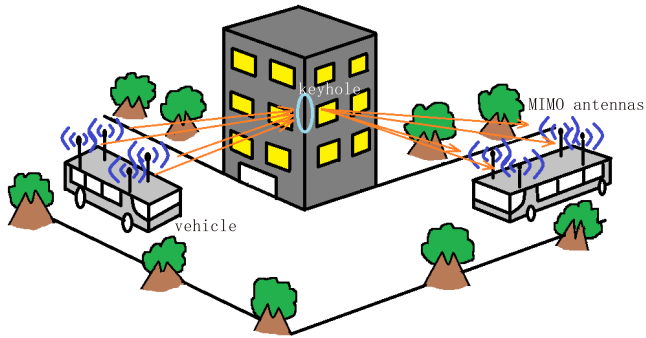}
  \caption{An example for HARQ-MIMO assisted V2V communications with keyhole effect.}\label{fig:model}
\end{figure}
As shown in Fig. \ref{fig:model}, we consider a MIMO-HARQ V2V communication system under urban environments where the transmit vehicle and the receive vehicle are equipped with $N_{T}$ and $N_{R}$ antennas, respectively. We assume that there are a number of obstacles between the transmitter and receiver, and the transmitted signal propagates through electromagnetically small apertures (or keyholes) among obstacles. By following the keyhole channel model in \cite{shin2003effect}, the channel matrix ${\bf H}$ for MIMO-HARQ V2V communications is given by
\begin{equation}\label{eqn:channel_model}
{{\bf{H}}} = {{\bf{u}}}{{\bf{v}}}^{\rm{H}} = \left( {\begin{array}{*{20}{c}}
{{u_{1}}{v_{1}^\ast}}&{{u_{1}}{v_{2}^\ast}}& \cdots &{{u_{1}}{v_{N_T}^\ast}}\\
{{u_{2}}{v_{1}^\ast}}&{{u_{2}}{v_{2}^\ast}}& \cdots &{{u_{2}}{v_{N_T}^\ast}}\\
 \vdots & \vdots & \ddots & \vdots \\
{{u_{{N_R}}}{v_{1}^\ast}}&{{u_{{N_R}}}{v_{2}^\ast}}& \cdots &{{u_{{N_R}}}{v_{N_T}^\ast}}
\end{array}} \right),
\end{equation}
where each entry of ${\bf{v}}$ and ${\bf{u}}$ follows independent and identically distributed (i.i.d.) complex normal distribution, i.e., ${\bf{v}} \sim {\mathcal {CN}}({\bf{0}},{\bf I}_{N_T})$ and ${\bf{u}} \sim {\mathcal {CN}}({\bf{0}},{\bf I}_{N_R})$.

Clearly from \eqref{eqn:channel_model}, ${{\bf{H}}}$ is a rank-one matrix in the presence of the keyhole effect. The rank-one deficiency significantly reduces the spatial multiplexing gain and degrades the MIMO capacity. To enhance the reception reliability, three types of HARQ schemes, i.e., Type-I HARQ, HARQ-CC and HARQ-IR, are employed in this paper. Therefore, the accumulated mutual information obtained by the three types of HARQ schemes after $K$ HARQ rounds can be expressed as \cite{7438859}
\begin{equation}\label{eqn:mu}
I = \left\{ {\begin{array}{*{20}{c}}
{\max\limits_{k=1,\cdots,K}{\log _2}\left( {\det \left[ {\mathbf{I}_{{N_R}}} + \frac{{{\gamma _k}}}{{{N_T}}}{{\bf{H}}_k}{{\bf{H}}_k}^{\rm{H}} \right]} \right),}&{\textup{Type-I}},\\
{{\log _2}\left( {\det \left[ {{{\bf I}_{{KN_R}}} + \mathbf{H}_{CC}{\mathbf{H}_{CC}}^{\rm{H}}} \right]} \right),}&{\textup{HARQ-CC}},\\
{\sum\limits_{k = 1}^K {{{\log }_2}\left( {\det \left[ {{{\bf I}_{{N_R}}} + \frac{{{\gamma _k}}}{{{N_T}}}{{\bf{H}}_k}{{\bf{H}}_k}^{\rm{H}}} \right]} \right)}},&{\textup{HARQ-IR}},
\end{array}} \right.
\end{equation}
where ${\bf{H}}_k,~k\in\{1,\cdots,K\}$ represents the keyhole channel matrix for the $k$-th HARQ transmission round and is modelled according to (\ref{eqn:mu}), $\gamma _k$ stands for the average transmit SNR at the transmitter for the $k$-th round, and $\mathbf{H}_{CC}=[{\sqrt {{{{\gamma _1}}}/{{{N_T}}}}{{\bf{H}}_1}^{\rm{H}}}, \cdots,{\sqrt {{{{\gamma _K}}}/{{{N_T}}}}{{\bf{H}}_K}^{\rm{H}}}]^{{\rm{H}}}$. Besides, ${\bf{H}}_1, \cdots, {\bf{H}}_K$ are assumed to be independent random matrices.

\textcolor[rgb]{0.00,0.07,1.00}{Clearly from (\ref{eqn:channel_model}), since each entry of the channel matrix is expressed as a product of two complex Gaussian random variables, this complex form hinders the following outage analyses. Besides, the accumulated mutual information for the three types of HARQ schemes involves many matrix operations,
such as determinant operations, product of block matrix, which further complicates the analysis. To the best of our knowledge, there is no available results for the outage performance of MIMO-HARQ V2V communications in the literature.}
\section{Analysis of Outage Probability}\label{sec:out}
\textcolor[rgb]{0.00,0.07,1.00}{To investigate the performance of MIMO-HARQ assisted V2V communication systems by considering the three different HARQ schemes, the outage probability is the most significant and essential performance metric}. More specifically, the outage probability is defined as the probability of the event that the accumulated mutual information is less than the preset transmission rate $R$. Accordingly, the outage probability of MIMO-HARQ assisted V2V communication systems can be written as
\begin{equation}\label{eqn:op}
P_{out} = \Pr \left( {I < R} \right).
\end{equation}
By substituting (\ref{eqn:mu}) into (\ref{eqn:op}), the outage analyses for the three types of MIMO-HARQ schemes will be undertaken individually in the following. \textcolor[rgb]{0.00,0.07,1.00}{Furthermore, other performance metrics such as throughput, delay, and ergodic capacity can be expressed in terms of the outage probability. For instance, the throughput of HARQ can be obtained by using \cite[Eq. (5)]{chelli2014performance}. In addition, the ergodic capacity is defined as the average mutual information \cite{muller2007characterization}. The ergodic capacity can be expressed in terms of the MGF of the probability density function (PDF) of the mutual information $I$, which is determined by the distribution of $I$ in \eqref{eqn:op} \cite{taricco2018outage}. We omit the detailed discussions due to the page limitation.}
\subsection{MIMO-Type-I HARQ}
For Type-I HARQ scheme, the erroneously received packets are discarded and negative acknowledgement is sent back to request the retransmission of the message until the maximum number of transmissions, i.e., $K$. By substituting the  accumulated mutual information of Type I-HARQ scheme (\ref{eqn:mu}) into (\ref{eqn:op}) and combining with MIMO keyhole channel model (\ref{eqn:channel_model}), the outage probability of MIMO-Type-I HARQ scheme can be rewritten as
 \begin{align}\label{eqn:out_probI}
P_{out}^{Type-I}&=\Pr \left( {\max\limits_{k=1,...,K}{\log _2}\left( {\det \left[ {\mathbf{I}_{{N_R}}} + \frac{{{\gamma _k}}}{{{N_T}}}{{\bf{H}}_k}{{\bf{H}}_k}^{\rm{H}} \right]} \right) < R} \right) \notag\\
&= \prod\limits_{k = 1}^K {\Pr \left( {{{\log }_2}\left( {1 + \frac{{{\gamma _k}}}{{{N_T}}}{{\left\| {{{\bf{u}}_k}} \right\|}^2}{{\left\| {{{\bf{v}}_k}} \right\|}^2}} \right) < R} \right)} \notag\\
&=\prod\limits_{k = 1}^K {{F_{{X_k}}}\left( {\frac{{{N_T}}}{{{\gamma _k}}}\left( {{2^R} - 1} \right)} \right)},
\end{align}
where the second step holds by using $\mathrm{det}(\mathbf{I}+\mathbf{A}\mathbf{B})=\mathrm{det}(\mathbf{I}+\mathbf{B}\mathbf{A})$, and  ${F_{{X_k}}}(x)$ denotes the cumulative distribution function
(CDF) of $X_k = {{\left\| {{{\bf{u}}_k}} \right\|}^2}{{\left\| {{{\bf{v}}_k}} \right\|}^2}$. From (\ref{eqn:out_probI}), the outage analysis of MIMO-Type-I HARQ scheme boils down to determining the distribution of the equivalent channel gain ${{\left\| {{{\bf{u}}_k}} \right\|}^2}{{\left\| {{{\bf{v}}_k}} \right\|}^2}$. Since ${\left\| {{{\bf{u}}_k}} \right\|}^2$ and ${\left\| {{{\bf{v}}_k}} \right\|}^2$ are central chi-square random variables with $2N_R$ and $2N_T$ degrees of freedom, respectively. The PDF of $X_k$ is given as \cite{shin2003capacity}
\begin{align}\label{eqn:pdf_X_l}
{f_{{X_k}}}\left( x \right)=  \frac{{2{x^{\left( {{N_T} + {N_R}} \right)/2 - 1}}{K_\tau}\left( {2\sqrt x } \right)}}{{\Gamma \left( {{N_T}} \right)\Gamma \left( {{N_R}} \right)}},
\end{align}
 where $\Gamma(\cdot)$ is the gamma function \cite[Eq. (8.310)]{jeffrey2007table}, $K_{\tau}(\cdot)$ represents the modified
Bessel function of the $\tau$-th order \cite[Eq. (8.432.6)]{jeffrey2007table}, and $\tau=\left| {{N_T} - {N_R}} \right|$. \textcolor[rgb]{0.00,0.07,1.00}{In analogy to \cite{adamchik1990algorithm}, it is suggested to invoke Meijer G-function to generalize our analytical results}. By utilizing \cite[Eq. (9.34.3)]{jeffrey2007table}, the PDF of $X_k$ can be expressed in the form of Meijer G-function as \cite[Eq. (9.301)]{jeffrey2007table}
\begin{align}\label{eqn:pdf_X_l_Meij}
{f_{{X_k}}}\left( x \right) &= \frac{{G_{0,2}^{2,0}\left( {\left. {\begin{array}{*{20}{c}}
 - \\
{{N_T},{N_R}}
\end{array}} \right|x} \right)}}{{x\Gamma \left( {{N_T}} \right)\Gamma \left( {{N_R}} \right)}}.
\end{align}
Then, the CDF of $X_k$ can be derived by using \cite[Eq. (26)]{adamchik1990algorithm} as
\begin{equation}\label{eqn:CDF_Zl}
{F_{{X_k}}}\left( x \right) =\int_0^x {{f_{{X_k}}}\left( y \right)dy}= \frac{{G_{1,3}^{2,1}\left( {\left. {\begin{array}{*{20}{c}}
1\\
{{N_T},{N_R},0}
\end{array}} \right|x} \right)}}{{\Gamma \left( {{N_T}} \right)\Gamma \left( {{N_R}} \right)}}.
\end{equation}
Finally, substituting  (\ref{eqn:CDF_Zl}) into (\ref{eqn:out_probI}) yields the closed-form expression of outage probability for the MIMO-Type-I HARQ scheme as
\begin{align}\label{eqn:op_type_I}
P_{out}^{Type-I}
=&\frac{1}{\left({\Gamma \left( {{N_T}} \right)\Gamma \left( {{N_R}} \right)}\right)^K}\prod\limits_{k = 1}^K {G_{1,3}^{2,1}\left( {\left. {\begin{array}{*{20}{c}}
1\\
{{N_T},{N_R},0}
\end{array}} \right|{\frac{{{N_T}}}{{{\gamma _k}}}\left( {{2^R} - 1} \right)}} \right)}.
\end{align}
However, although the Meijer G-function in (\ref{eqn:op_type_I}) is a built-in function in many popular mathematical software packages, such as MATLAB, the integral form of Meijer G-function are still complex which hampers the extraction of useful insights, such as diversity order and coding gain. To overcome this issue, an asymptotic expression of the asymptotic outage probability is obtained in the high SNR regime, as shown in the following theorem.
\begin{theorem}\label{The:asy_I}
In the high SNR regime, i.e., $\gamma_k\to\infty$, the asymptotic outage probability of MIMO-Type-I HARQ scheme is given by
\begin{equation}\label{eqn:asy I}
P_{out}^{Type-I} \simeq \left\{ {\begin{array}{*{20}{c}}
{\prod\limits_{k = 1}^K {\frac{{{{\left( {\frac{{{N_T}}}{{{\gamma _k}}}\left( {{2^R} - 1} \right)} \right)}^{{N_T}}}\ln {{\gamma _k}} }}{{{N_T}({{\Gamma}\left( {{N_T}} \right)}) ^{2}}}},}&{\textup{$N_T - N_R = 0$}},\\
{\prod\limits_{k = 1}^K {\frac{{\Gamma \left( {\tau} \right)}}{{\Gamma \left( {{N_T}} \right)\Gamma \left( {{N_R}} \right)}}\frac{{{{\left( {\frac{{{N_T}}}{{{\gamma _k}}}\left( {{2^R} - 1} \right)} \right)}^{\left( {{N_T} + {N_R}} \right)/2 -\tau/2}}}}{{\left( {{N_T} + {N_R}} \right)/2 - \tau/2}}},}&{\textup{$N_T - N_R \ne 0$}}.\\
\end{array}} \right.
\end{equation}
\end{theorem}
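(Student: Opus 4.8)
\emph{Proof proposal.} The plan is to start from the exact closed-form expression in \eqref{eqn:op_type_I}. Since the Type-I outage probability factorizes across the $K$ HARQ rounds, it suffices to obtain the leading-order behavior of the single-round CDF $F_{X_k}(x)$ as its argument $x=\tfrac{N_T}{\gamma_k}(2^R-1)\to 0$, which is precisely the high-SNR limit $\gamma_k\to\infty$. Rather than expanding the Meijer $G$-function in \eqref{eqn:CDF_Zl} directly, I would return to the Bessel-function form of the PDF in \eqref{eqn:pdf_X_l} and expand $K_\tau(2\sqrt{x})$ about the origin, because this series is elementary and the two qualitatively different regimes $\tau=0$ and $\tau\ge1$ (equivalently $N_T=N_R$ and $N_T\ne N_R$) surface transparently.

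First I would insert the small-argument expansion of the modified Bessel function into \eqref{eqn:pdf_X_l}. For $\tau=|N_T-N_R|\ge1$, using $K_\tau(z)\sim\tfrac12\Gamma(\tau)(z/2)^{-\tau}$ gives $f_{X_k}(x)\sim\tfrac{\Gamma(\tau)}{\Gamma(N_T)\Gamma(N_R)}x^{(N_T+N_R)/2-\tau/2-1}$ with all corrections carrying strictly higher powers of $x$; integrating from $0$ to $x$ then yields $F_{X_k}(x)\sim\tfrac{\Gamma(\tau)}{\Gamma(N_T)\Gamma(N_R)}\cdot\tfrac{x^{(N_T+N_R)/2-\tau/2}}{(N_T+N_R)/2-\tau/2}$. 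For $\tau=0$ the two Bessel indices merge and a logarithm appears via $K_0(z)=-(\ln(z/2)+\gamma_E)+O(z^2\ln z)$, so that $f_{X_k}(x)\sim\tfrac{x^{N_T-1}(-\ln x-2\gamma_E)}{(\Gamma(N_T))^2}$; using $\int_0^x y^{N_T-1}(-\ln y)\,dy=-\tfrac{x^{N_T}}{N_T}\ln x+\tfrac{x^{N_T}}{N_T^2}$ and noting that the $-\tfrac{x^{N_T}}{N_T}\ln x$ term dominates both the $\gamma_E$-contribution and the $\tfrac{x^{N_T}}{N_T^2}$ term as $x\to0$, one gets $F_{X_k}(x)\sim\tfrac{x^{N_T}(-\ln x)}{N_T(\Gamma(N_T))^2}$.

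The final step is to substitute $x=\tfrac{N_T}{\gamma_k}(2^R-1)$ and retain only the dominant power of $\gamma_k$; in the $\tau=0$ branch this uses $-\ln x=\ln\gamma_k-\ln(N_T(2^R-1))\sim\ln\gamma_k$. Forming $P_{out}^{Type-I}=\prod_{k=1}^K F_{X_k}\!\big(\tfrac{N_T}{\gamma_k}(2^R-1)\big)$ then reproduces both branches of \eqref{eqn:asy I}.

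The main obstacle is making ``$\simeq$'' rigorous, i.e.\ justifying term-by-term integration of the asymptotic series for $f_{X_k}$ near the origin and, in the degenerate case $N_T=N_R$, confirming that the logarithmic term genuinely dominates the $O(x^{N_T})$ remainder so that dropping the constant inside $-\ln x$ and the $\gamma_E$ term is legitimate. As a cross-check I would also read the expansion off the Mellin--Barnes representation of $G_{1,3}^{2,1}$ by summing residues: the rightmost poles of the integrand lie at $s=N_R$ and $s=N_T$, which is a simple pole producing the power law $x^{\min(N_T,N_R)}$ when $N_T\ne N_R$ and a double pole producing the extra factor $\ln x$ when $N_T=N_R$, thereby explaining the two-case structure of the theorem in one picture.
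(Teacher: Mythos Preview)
Your proposal is correct and follows essentially the same route as the paper: both start from the Bessel form \eqref{eqn:pdf_X_l}, invoke the small-argument asymptotics $K_\tau(z)\sim\tfrac12\Gamma(\tau)(z/2)^{-\tau}$ for $\tau>0$ and $K_0(z)\sim-\ln z$ for $\tau=0$, integrate termwise, and then substitute $x=\tfrac{N_T}{\gamma_k}(2^R-1)$ with $-\ln x\sim\ln\gamma_k$. Your treatment is slightly more careful in tracking the $\gamma_E$ and $x^{N_T}/N_T^2$ remainders in the $\tau=0$ case, and your Mellin--Barnes residue cross-check (simple versus double pole at $s=\min(N_T,N_R)$) is a nice addition that the paper does not include here, though it uses an analogous residue argument later for the HARQ-IR asymptotics.
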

\begin{proof}
Please see Appendix \ref{eqn:8}.
\end{proof}
\subsection{MIMO-HARQ-CC}\label{HARQ-CC}
With regard to the MIMO-HARQ-CC scheme,  the same packet is retransmitted, all the previously failed packets are stored for the subsequent decoding. Based on (\ref{eqn:mu}) together with the definition of $\mathbf{H}_{CC}$, the accumulated mutual information for the MIMO-HARQ-CC scheme can be rewritten as
\begin{align}\label{eqn:ICC_fur}
I_{CC}
 &= {\log _2}\left( {\det \left( {{{\bf{I}}_{N_T}} + \sum\limits_{k = 1}^K {\frac{{{\gamma _k}}}{{{N_T}}}{\left\| {{{\bf{u}}_k}} \right\|^2}{{\bf{v}}_k}{{\bf{v}}_k}^{\rm{H}}} } \right)} \right).
\end{align}
Clearly, since ${\left\| {{{\bf{u}}_k}} \right\|^2}$ is a central chi-square random variable and ${{\bf{v}}_k}{{{\bf{v}}_k}^{\rm{H}}}$ is a complex central Wishart matrix with one degree of freedom, the complex form of \eqref{eqn:ICC_fur} makes the derivation of the distribution of $I_{CC}$ fairly intractable. {\color{blue} Unfortunately, the integral transform based approaches developed in \cite{yang2021asymptotic,shin2003effect,shi2020achievable,shi2017asymptotic} are inapplicable herein.} Hence, we resort to a lower bound to approximate $I_{CC}$, as shown in the following Lemma.
\begin{lemma}\label{lemma1}
The accumulated mutual information of the MIMO-HARQ-CC scheme is lower bounded as
\begin{align}\label{eqn:A_K_det_lower}
&I_{CC}\geq  {\log _2}\left(1 + \sum\limits_{k = 1}^K {\frac{{{\gamma _k}}}{{{N_T}}}{\left\| {{{\bf{u}}_k}} \right\|^2}{{\left\| {{{\bf{v}}_k}} \right\|}^2}}\right).
\end{align}
\end{lemma}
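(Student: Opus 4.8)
The plan is to read off the right-hand side of \eqref{eqn:A_K_det_lower} as the linear term in the expansion of the determinant in \eqref{eqn:ICC_fur}, and to control the remaining terms by positivity. Starting from \eqref{eqn:ICC_fur}, write $I_{CC}=\log_2\det(\mathbf{I}_{N_T}+\mathbf{A})$ with $\mathbf{A}=\sum_{k=1}^{K}\frac{\gamma_k}{N_T}\|\mathbf{u}_k\|^2\,\mathbf{v}_k\mathbf{v}_k^{\rm H}$. Each summand is a nonnegative scalar times the rank-one positive semidefinite matrix $\mathbf{v}_k\mathbf{v}_k^{\rm H}$, so $\mathbf{A}\succeq\mathbf{0}$ and its eigenvalues $\lambda_1,\dots,\lambda_{N_T}$ are all nonnegative.

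The key step is the elementary inequality $\prod_{i=1}^{N_T}(1+\lambda_i)\ge 1+\sum_{i=1}^{N_T}\lambda_i$, which holds whenever $\lambda_i\ge 0$: expanding the product on the left produces $1+\sum_i\lambda_i$ together with a sum of products of the $\lambda_i$, each of which is nonnegative. Since $\det(\mathbf{I}_{N_T}+\mathbf{A})=\prod_{i=1}^{N_T}(1+\lambda_i)$ and $\sum_{i=1}^{N_T}\lambda_i=\mathrm{tr}(\mathbf{A})$, this gives $\det(\mathbf{I}_{N_T}+\mathbf{A})\ge 1+\mathrm{tr}(\mathbf{A})$. It then remains to evaluate the trace: by linearity and $\mathrm{tr}(\mathbf{v}_k\mathbf{v}_k^{\rm H})=\mathbf{v}_k^{\rm H}\mathbf{v}_k=\|\mathbf{v}_k\|^2$, one obtains $\mathrm{tr}(\mathbf{A})=\sum_{k=1}^{K}\frac{\gamma_k}{N_T}\|\mathbf{u}_k\|^2\|\mathbf{v}_k\|^2$. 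Applying the monotonically increasing map $\log_2(\cdot)$ to both sides yields \eqref{eqn:A_K_det_lower}.

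There is no real obstacle here beyond bookkeeping; the only point worth flagging is that the bound is loose in general, since it collapses the rank-$\min(K,N_T)$ matrix $\mathbf{A}$ onto its trace, with equality exactly when $\mathbf{A}$ has rank one (e.g.\ $K=1$, or all $\mathbf{v}_k$ colinear). I would remark that this looseness only inflates the coding-gain constant and does not alter the SNR exponent, which is why the lower bound in \eqref{eqn:A_K_det_lower} is an adequate surrogate for $I_{CC}$ in the subsequent outage and diversity analysis.
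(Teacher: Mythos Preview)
Your argument is correct. The route, however, differs from the paper's. You go directly through the eigenvalue factorisation $\det(\mathbf{I}_{N_T}+\mathbf{A})=\prod_i(1+\lambda_i)$ and the elementary expansion inequality $\prod_i(1+\lambda_i)\ge 1+\sum_i\lambda_i=1+\mathrm{tr}(\mathbf{A})$, then read off the trace. The paper instead peels off the rank-one terms one at a time: with $\mathbf{A}_n=\mathbf{I}_{N_T}+\sum_{k\le n}\frac{\gamma_k}{N_T}\|\mathbf{u}_k\|^2\mathbf{v}_k\mathbf{v}_k^{\rm H}$, the matrix determinant lemma gives $\det(\mathbf{A}_K)=\det(\mathbf{A}_{K-1})\bigl(1+\frac{\gamma_K}{N_T}\|\mathbf{u}_K\|^2\,\mathbf{v}_K^{\rm H}\mathbf{A}_{K-1}^{-1}\mathbf{v}_K\bigr)$, and since all eigenvalues of $\mathbf{A}_{K-1}$ are at least $1$ one has $\mathbf{A}_{K-1}^{-1}\succeq \mathbf{I}_{N_T}/\det(\mathbf{A}_{K-1})$, whence $\det(\mathbf{A}_K)\ge \det(\mathbf{A}_{K-1})+\frac{\gamma_K}{N_T}\|\mathbf{u}_K\|^2\|\mathbf{v}_K\|^2$; iterating yields the same bound. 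Your approach is shorter and avoids the recursive machinery; the paper's recursion, on the other hand, foreshadows the determinant-peeling manipulations used later in the large-array analysis (Theorem~\ref{the:las}). Either proof is fine for the lemma itself.
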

\begin{proof}
By defining ${{\bf{A}}_{n}} = {{\bf{I}}_{N_T}} + \sum\nolimits_{k = 1}^{n} {\frac{{{\gamma _k}}}{{{N_T}}}{\left\| {{{\bf{u}}_k}} \right\|^2}{{\bf{v}}_k}{{\bf{v}}_k}^{\rm{H}}}$, $n\in\{0,\cdots,K-1\}$, it follows that
\begin{align}\label{eqn:proof}
I_{CC}&={\log _2}\left(\det \left( {{\bf{A}}_{K - 1}} \left(1+\frac{{{\gamma _K}}}{{{N_T}}}{\left\| {{{\bf{u}}_K}} \right\|^2}{{\bf{v}}_K}^{\rm{H}}{\bf{A}}_{K - 1}^{ - 1}{{\bf{v}}_K}\right)\right)\right)\notag\\
&={\log _2}\left( {\det \left( {{{\bf{A}}_{K - 1}}} \right)\left( {1 + \frac{{{\gamma _K}}}{{{N_T}}}{{\left\| {{{\bf{u}}_K}} \right\|}^2}{{\bf{v}}_K}^{\rm{H}}{\bf{A}}_{K - 1}^{ - 1}{{\bf{v}}_K}} \right)} \right)\notag\\
&\geq {\log _2}\left(\det \left( {{{\bf{A}}_{K - 1}}} \right) + \frac{{{\gamma _K}}}{{{N_T}}}{\left\| {{{\bf{u}}_K}} \right\|^2}{{\left\| {{{\bf{v}}_K}} \right\|}^2}\right)\geq \cdots\notag\\
&\geq   {\log _2}\left(\det \left( {{{\bf{A}}_{n}}} \right) + \sum\limits_{k = n+1}^K {\frac{{{\gamma _k}}}{{{N_T}}}{\left\| {{{\bf{u}}_k}} \right\|^2}{{\left\| {{{\bf{v}}_k}} \right\|}^2}}\right),
\end{align}
where the inequality holds because ${{\bf{A}}_{n}}$ is a positive definite matrix with all eigenvalues no less than $1$. Thus we have $ {\bf{A}}_{n}^{ - 1} \succ {{\bf{I}}_{N_T}}/{{\det \left( {{{\bf{A}}_{n}}} \right)}}$. By repeatedly applying this property, \eqref{eqn:A_K_det_lower} can be finally obtained.
\end{proof}

By substituting (\ref{eqn:A_K_det_lower}) into (\ref{eqn:op}), the upper bound of outage probability for the MIMO-HARQ-CC scheme is given by
 \begin{align}\label{eqn:out_CC}
P_{out}^{CC}
&\leq\Pr \left( {\log _2}\left(1 + \sum\limits_{k = 1}^K {\frac{{{\gamma _k}}}{{{N_T}}}{\left\| {{{\bf{u}}_k}} \right\|^2}{{\left\| {{{\bf{v}}_k}} \right\|}^2}}\right) < R \right) \notag\\
& =  {F_{{X_{CC}}}}\left( {2^R} - 1 \right),
\end{align}
where ${X_{CC}}= \sum\nolimits_{k = 1}^K {{{\gamma _k}/{N_T}}{\left\| {{{\bf{u}}_k}} \right\|^2}{{\left\| {{{\bf{v}}_k}} \right\|}^2}}= \sum\nolimits_{k = 1}^K{{\gamma _k}/{N_T}} X_k$. To determine the CDF of ${X_{CC}}$, it is essential to obtain the distribution of the sum of multiple random variables $X_1, \cdots, X_K$, which can be tackled by using the method of MGF \cite{shi2020achievable}. As a result, the CDF of ${X_{CC}}$ is given by the following theorem.
\begin{theorem}\label{The:pcdf_Y_K}
The  CDF of ${X_{CC}}$  for the cases of $N_{T} \ge N_{R}$ and $N_{T} < N_{R}$ are given by
\begin{equation}\label{eqn:cdf_Yk}
{F_{{X_{CC}}}}\left( x \right) = \left\{ {\begin{array}{*{20}{c}}
{\prod\limits_{k = 1}^K {{{\left( {\frac{{{N_T}}}{{{\gamma _k}}}} \right)}^{{N_T}}}} \frac{1}{{2\pi {\rm i}}}\int\limits_{a - {\rm i}\infty }^{a + {\rm i}\infty } {\frac{{\prod\limits_{k = 1}^K {\Psi \left( {{N_T},\tau+1;\frac{{{N_T}}}{{{\gamma _k}}}{s^{ - 1}}} \right)} }}{{{s^{K{N_T} + 1}}}}{e^{sx}}ds},}&{\textup{$N_{T} \ge N_{R}$}}\\
 {\prod\limits_{k = 1}^K {{{\left( {\frac{{{N_T}}}{{{\gamma _k}}}} \right)}^{{N_R}}}} \frac{1}{{2\pi {\rm i}}}\int\limits_{a - {\rm i}\infty }^{a + {\rm i}\infty } {\frac{{\prod\limits_{k = 1}^K {\Psi \left( {{N_R},\tau+1;\frac{{{N_T}}}{{{\gamma _k}}}{s^{ - 1}}} \right)} }}{{{s^{K{N_R} + 1}}}}{e^{sx}}ds},}&{\textup{$N_{T} < N_{R}$}},
\end{array}} \right.
\end{equation}
where $\Psi \left( \alpha,\gamma;z \right)$ denotes the Tricomi confluent hypergeometric function \cite[eq. (9.211.4)]{jeffrey2007table}. It is worth mentioning that (\ref{eqn:cdf_Yk}) can be evaluated via numerical inversion of Laplace transform \cite{abate1995numerical}.
\end{theorem}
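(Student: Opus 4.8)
The plan is to work entirely in the Laplace/MGF domain. Since $\mathbf{H}_1,\dots,\mathbf{H}_K$ are independent, the variables $X_1,\dots,X_K$ are independent, hence so are the scaled terms $(\gamma_k/N_T)X_k$ whose sum is $X_{CC}$; therefore the Laplace transform of the PDF of $X_{CC}$ factorizes as $\prod_{k=1}^{K}\mathbb{E}\!\left[e^{-s(\gamma_k/N_T)X_k}\right]$, the Laplace transform of the CDF $F_{X_{CC}}$ is this product divided by $s$, and the asserted Bromwich integral in \eqref{eqn:cdf_Yk} is nothing but the inverse Laplace transform evaluated on a vertical line $\Re s=a$ lying to the right of the abscissa of convergence (any $a>0$ works, since $X_{CC}\ge 0$ has finite mean, so the MGF is analytic on $\Re s>0$). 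So the whole proof reduces to computing the single-term MGF $\mathbb{E}[e^{-sX_k}]$ in closed form.

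For that I would write $X_k=AB$ with $A=\|\mathbf{u}_k\|^2\sim\mathrm{Gamma}(N_R,1)$ and $B=\|\mathbf{v}_k\|^2\sim\mathrm{Gamma}(N_T,1)$ independent, and condition on one of them. Conditioning on $B=b$ gives $\mathbb{E}[e^{-sX_k}\mid B=b]=(1+sb)^{-N_R}$, so $\mathbb{E}[e^{-sX_k}]=\frac{1}{\Gamma(N_T)}\int_0^\infty b^{N_T-1}e^{-b}(1+sb)^{-N_R}\,db$; the key step is to substitute $t=sb$ and recognize the resulting integral as the integral representation of the Tricomi confluent hypergeometric function \cite[Eq.~(9.211.4)]{jeffrey2007table}, which yields $\mathbb{E}[e^{-sX_k}]=s^{-N_T}\Psi(N_T,N_T-N_R+1;1/s)$. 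Conditioning on $A$ instead gives the equivalent form $s^{-N_R}\Psi(N_R,N_R-N_T+1;1/s)$ (the two are related by a Kummer transformation of $\Psi$); picking whichever form keeps the middle parameter equal to $\tau+1=|N_T-N_R|+1$ is exactly what produces the case split $N_T\ge N_R$ versus $N_T<N_R$ in the statement. (Alternatively one may take the Laplace transform of the Meijer-$G$ form \eqref{eqn:pdf_X_l_Meij} and reduce the answer to $\Psi$; the conditioning route is shorter.) Replacing $s$ by $s\gamma_k/N_T$ then gives, e.g. for $N_T\ge N_R$, $\mathbb{E}[e^{-s(\gamma_k/N_T)X_k}]=(N_T/\gamma_k)^{N_T}s^{-N_T}\Psi\!\left(N_T,\tau+1;\tfrac{N_T}{\gamma_k}s^{-1}\right)$; multiplying over $k$, dividing by $s$, and inverting term by term reproduces \eqref{eqn:cdf_Yk}, and the closing remark about numerical Laplace inversion \cite{abate1995numerical} follows since the integrand decays along the contour.

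I expect the main obstacle to be the middle step: correctly identifying the single-term MGF with $\Psi$ and getting every parameter in the right slot, in particular tracking which of $N_T$ and $N_R$ plays which role and checking that the integral representation of $\Psi$ is valid on the half-plane $\Re s>0$ used for the Bromwich contour. Once that identity is secured, the factorization over independent terms, the division by $s$, and the inverse-Laplace bookkeeping are routine; one should also note briefly that $X_{CC}\ge 0$ guarantees convergence of the inversion integral for every $a>0$.
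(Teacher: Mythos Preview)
Your proposal is correct and follows essentially the same MGF-then-inverse-Laplace strategy as the paper: both factor the MGF of $X_{CC}$ over $k$, identify each single-term MGF with the Tricomi function $\Psi$ via \cite[Eq.~(9.211.4)]{jeffrey2007table}, and recover $F_{X_{CC}}$ by a Bromwich inversion. The only difference is cosmetic: the paper reaches the $\Psi$-defining integral by starting from the Bessel-$K$ form \eqref{eqn:pdf_X_l} of $f_{X_k}$ and invoking the table entries \cite[Eqs.~(6.643.3), (9.222.2)]{jeffrey2007table}, whereas your conditioning-on-one-Gamma-factor argument lands on the very same integral $\frac{1}{\Gamma(N_T)}\int_0^\infty t^{N_T-1}e^{-t}(1+st)^{-N_R}\,dt$ in one line; from that point the two computations coincide.
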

\begin{proof}
Please see Appendix \ref{eqn:CC}.
\end{proof}
Accordingly, substituting $x=2^R-1$ into (\ref{eqn:cdf_Yk}) gives rise to the upper bound of the outage probability for the MIMO-HARQ-CC scheme.

Similarly to Theorem \ref{The:asy_I}, the asymptotic outage analysis in the high SNR regime is performed for the MIMO-HARQ-CC scheme, as given by the following theorem.
\begin{theorem}\label{lem:op-cc}
Under high SNR, i.e., $\gamma_k\to\infty$, the CDF of $X_{CC}$ asymptotically behaves as
\begin{equation}\label{eqn:as_cc}
{F_{{X_{CC}}}}\left( x \right) \simeq \left\{ {\begin{array}{*{20}{c}}
{\frac{1}{{\left( {K{N_R}} \right)!}}\prod\limits_{k = 1}^K {\frac{{\Gamma \left(\tau \right)}}{{\Gamma \left( {{N_T}} \right)}}{{\left( {\frac{{{N_T}}}{{{\gamma _k}}}x} \right)}^{{N_R}}}},}&{\textup{$N_{T} > N_{R}$}},\\
{\frac{1}{{\left( {K{N_T}} \right)!}}\prod\limits_{k = 1}^K {\frac{{\Gamma \left( \tau \right)}}{{\Gamma \left( {{N_R}} \right)}}{{\left( {\frac{{{N_T}}}{{{\gamma _k}}}x} \right)}^{{N_T}}}},}&{\textup{$N_{T} < N_{R}$}},\\
{\frac{1}{{\left( {K{N_T}} \right)!}}\prod\limits_{k = 1}^K {\frac{{\ln  \gamma_k }}{{\Gamma \left( {{N_T}} \right)}}{{\left( {\frac{{{N_T}}}{{{\gamma _k}}}x} \right)}^{{N_T}}}},}&{\textup{$N_{T} = N_{R}$}},\\
\end{array}} \right.
\end{equation}
and the asymptotic outage probability for the MIMO-HARQ-CC scheme can be directly obtained by replacing $x$ in (\ref{eqn:as_cc}) with $2^R-1$.
\end{theorem}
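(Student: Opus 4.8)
The plan is to read off the high-SNR behaviour of $F_{X_{CC}}$ directly from the exact Bromwich representation \eqref{eqn:cdf_Yk} of Theorem \ref{The:pcdf_Y_K}, by replacing each Tricomi factor $\Psi\!\left(\cdot,\tau+1;\tfrac{N_T}{\gamma_k}s^{-1}\right)$ with its small-argument expansion. What makes this legitimate is that on the integration line $\Re s=a$ one has $|s^{-1}|\le 1/a$, so $|z_k|:=\left|\tfrac{N_T}{\gamma_k}s^{-1}\right|\le N_T/(a\gamma_k)\to 0$ uniformly along the whole contour as $\gamma_k\to\infty$. I would then split into the three cases according to $\tau=|N_T-N_R|$ using the $z\to 0$ asymptotics of the Tricomi function: $\Psi(a,b;z)\sim\frac{\Gamma(b-1)}{\Gamma(a)}z^{1-b}$ when $b>1$ (i.e.\ $\tau\ge 1$), and $\Psi(a,1;z)\sim-\frac{1}{\Gamma(a)}\ln z$ when $b=1$ (i.e.\ $\tau=0$, so $N_T=N_R$).

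Case by case: (i) for $N_T>N_R$ take the first branch of \eqref{eqn:cdf_Yk} with $\tau=N_T-N_R$, so $\Psi\!\left(N_T,\tau+1;\tfrac{N_T}{\gamma_k}s^{-1}\right)\sim\frac{\Gamma(\tau)}{\Gamma(N_T)}\big(\tfrac{N_T}{\gamma_k}\big)^{-\tau}s^{\tau}$ and the integrand collapses to a constant times $s^{K\tau-KN_T-1}e^{sx}=s^{-KN_R-1}e^{sx}$; (ii) for $N_T<N_R$ take the second branch with $\tau=N_R-N_T$ and $\Psi\!\left(N_R,\tau+1;\cdot\right)\sim\frac{\Gamma(\tau)}{\Gamma(N_R)}(\cdot)^{-\tau}$, leaving $s^{-KN_T-1}e^{sx}$; (iii) for $N_T=N_R$ use $\Psi\!\left(N_T,1;\tfrac{N_T}{\gamma_k}s^{-1}\right)\sim\frac{1}{\Gamma(N_T)}\ln\tfrac{\gamma_k}{N_T}\sim\frac{\ln\gamma_k}{\Gamma(N_T)}$ (the $\ln s$ part of $-\ln z_k$ being of lower order than $\ln\gamma_k$), again leaving $s^{-KN_T-1}e^{sx}$. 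In each case I then apply the elementary inversion identity $\frac{1}{2\pi{\rm i}}\int_{a-{\rm i}\infty}^{a+{\rm i}\infty}s^{-n-1}e^{sx}\,ds=\frac{x^{n}}{n!}$ and collect constants; writing $m=\min(N_T,N_R)$, the $\gamma_k$-dependent weight carried by the $k$-th Tricomi factor works out to $\frac{\Gamma(\tau)}{\Gamma(\max(N_T,N_R))}\big(\tfrac{N_T}{\gamma_k}\big)^{m}$ when $\tau\ge 1$ and to $\frac{\ln\gamma_k}{\Gamma(N_T)}\big(\tfrac{N_T}{\gamma_k}\big)^{N_T}$ when $\tau=0$, so that multiplying these over $k=1,\dots,K$ and combining with the factor $x^{Km}/(Km)!$ produced by the inversion reproduces \eqref{eqn:as_cc} verbatim. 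Substituting $x=2^{R}-1$ then gives the asymptotic outage probability of MIMO-HARQ-CC.

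The hard part is the rigour of the limit: I must justify that the leading term of the integrand delivers the leading term of the Bromwich integral, i.e.\ that $\gamma_k\to\infty$ commutes with the contour integral and that the discarded higher-order pieces of the Tricomi expansion (and, for $\tau=0$, the $\ln s$ remainder) are genuinely $o(\cdot)$ after integration. I would handle this by keeping $a>0$ fixed and combining the uniform smallness of $z_k$ with the bounds $|e^{sx}|=e^{ax}$ and $\left|\Psi\!\left(\cdot,\tau+1;z_k\right)\right|\lesssim(\gamma_k/N_T)^{\tau}|s|^{\tau}$ to build a $\gamma_k$-independent, line-integrable dominating function $\propto|s|^{-KN_R-1}$ (or $|s|^{-KN_T-1}$), after which dominated convergence applies. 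As an independent check of the constants I would also rederive \eqref{eqn:as_cc} without invoking Theorem \ref{The:pcdf_Y_K}: on the event $\{X_{CC}<2^{R}-1\}$ each $\tfrac{\gamma_k}{N_T}X_k<2^{R}-1$, so every $X_k$ is confined to the shrinking interval $[0,\tfrac{N_T}{\gamma_k}(2^{R}-1)]$, on which the small-argument form of $K_\tau$ turns \eqref{eqn:pdf_X_l} into $f_{X_k}(x)\sim\frac{\Gamma(\tau)}{\Gamma(N_T)\Gamma(N_R)}x^{m-1}$ for $\tau\ge1$ and $f_{X_k}(x)\sim\frac{x^{m-1}\ln(1/x)}{(\Gamma(N_T))^{2}}$ for $\tau=0$; the substitution $y_k=\tfrac{\gamma_k}{N_T}x_k$ then factors the probability into the Dirichlet integral $\int_{\{y_k\ge0,\;\sum_k y_k<2^{R}-1\}}\prod_{k=1}^{K}y_k^{m-1}\,d\mathbf{y}=(2^{R}-1)^{Km}\,\Gamma(m)^{K}/(Km)!$, which returns the same expression.
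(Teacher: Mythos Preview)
Your proposal is correct and follows essentially the same route as the paper: substitute the small-argument asymptotics of the Tricomi function $\Psi(\cdot,\tau+1;z)$ into the Bromwich representation \eqref{eqn:cdf_Yk} and evaluate the resulting inverse Laplace transform $\frac{1}{2\pi{\rm i}}\int s^{-Km-1}e^{sx}\,ds=x^{Km}/(Km)!$. Your treatment is in fact more careful than the paper's (which simply cites the DLMF expansions and ``arrives at'' \eqref{eqn:as_cc}), since you spell out the dominated-convergence justification and add the independent Dirichlet-integral verification from the small-$x$ behaviour of $f_{X_k}$.
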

\begin{proof}
Please see Appendix \ref{app:op-CC}.
\end{proof}
\subsection{MIMO-HARQ-IR}
Different from HARQ-CC, HARQ-IR employs code combining for joint decoding with the erroneously received packets. Besides, the redundant information is incrementally transmitted in each HARQ round. By substituting the accumulated mutual information for MIMO-HARQ-IR into (\ref{eqn:op}), the corresponding outage probability is rewritten as
\begin{align}\label{eqn:out_prob}
P_{out}^{IR}
&= \Pr \left( {\sum\limits_{k = 1}^K {{{\log }_2}\left( {\det \left( {{{\bf I}_{{N_R}}} + \frac{{{\gamma _k}}}{{{N_T}}}{{\bf{H}}_k}{{\bf{H}}_k}^{\rm{H}}} \right)} \right)}< R} \right) \notag\\
&= \Pr \left( \prod\limits_{k = 1}^K {\left( {1 + \frac{{{\gamma _k}}}{{{N_T}}}{X_k}} \right)}  < {2^R}\right) \buildrel \Delta \over = {F_{{X_{IR}}}}\left( {{2^R}} \right),
\end{align}
where $F_{X_{IR}}(x)$ represents the CDF of $X_{IR}=\prod\nolimits_{k = 1}^K {\left( {1 + {{{\gamma _k}}}/{{{N_T}}}{X_k}} \right)}$.

It is easily found that $X_{IR}$ is the product of multiple independent random variables. Hence, we recourse to Mellin transform which is widely adopted to derive the distribution of the product of multiple independent random variables \cite{yang2021asymptotic,shi2017asymptotic}. Specifically, the Mellin transform is a kind of integral transform like the Laplace and Fourier transforms. The Mellin transform has the property that the Mellin transform of the PDF of the product of multiple independent random variables is equal to the product of their Mellin transforms. With this property, the CDF of $X_{IR}$ is obtained in the following theorem.
\begin{theorem}\label{lem:pcdf_Z_K}
The CDF of $X_{IR}$ is written in terms of an inverse Mellin transform as
\begin{align}\label{eqn:CDF_Z_k_der}
{F_{X_{IR}}}\left( x \right)
& = \frac{1}{{2\pi {\rm i}}}\int\limits_{ - c - {\rm i}\infty }^{ - c + {\rm i}\infty } {\frac{{{x^s}}}{s}\prod\limits_{k = 1}^K {\frac{{G_{1,3}^{3,1}\left( {\left. {\begin{array}{*{20}{c}}
1\\
{s,{N_T},{N_R}}
\end{array}} \right|\frac{{{N_T}}}{{{\gamma _k}}}} \right)}}{{\Gamma \left( {{N_T}} \right)\Gamma \left( {{N_R}} \right)\Gamma \left( {s} \right)}}} ds}.
\end{align}
\end{theorem}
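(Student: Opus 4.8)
The plan is to use the fact that $X_{IR}=\prod_{k=1}^{K}Y_{k}$ with $Y_{k}=1+\frac{\gamma_{k}}{N_{T}}X_{k}$, that the $X_{k}$ (hence the $Y_{k}$) are independent, and that the Mellin transform maps a product of independent positive random variables to the product of the individual transforms, as recalled just before the theorem. By independence, $\mathbb{E}[X_{IR}^{-s}]=\prod_{k=1}^{K}\mathbb{E}[Y_{k}^{-s}]$. To turn this into a CDF I would invoke the Mellin--Barnes form of the unit step, $\mathbf{1}(0<u<1)=\frac{1}{2\pi{\rm i}}\int \frac{u^{-s}}{s}\,ds$ on a vertical line on the appropriate side of $s=0$; applying it with $u=X_{IR}/x$ and taking expectations (the variable is a.s.\ positive, so Fubini applies) yields $\Pr(X_{IR}<x)=\frac{1}{2\pi{\rm i}}\int \frac{x^{s}}{s}\,\mathbb{E}[X_{IR}^{-s}]\,ds=\frac{1}{2\pi{\rm i}}\int \frac{x^{s}}{s}\prod_{k=1}^{K}\mathbb{E}[Y_{k}^{-s}]\,ds$, which is exactly (\ref{eqn:CDF_Z_k_der}) once each $\mathbb{E}[Y_{k}^{-s}]$ is identified with the $k$-th factor of the product there.

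The core computation is therefore the single-block transform $\mathbb{E}[Y_{k}^{-s}]=\int_{0}^{\infty}\big(1+\frac{\gamma_{k}}{N_{T}}x\big)^{-s}f_{X_{k}}(x)\,dx$. I would substitute the Meijer-$G$ density (\ref{eqn:pdf_X_l_Meij}), rescale $x$ so that the argument of the $G$-function becomes $\frac{N_{T}}{\gamma_{k}}y$, and read the result as a Mellin-type convolution of $(1+y)^{-s}$ against the function $G_{0,2}^{2,0}$ appearing in (\ref{eqn:pdf_X_l_Meij}). Replacing $(1+y)^{-s}$ by its Mellin--Barnes form $(1+y)^{-s}=\frac{1}{2\pi{\rm i}}\int \frac{\Gamma(w)\Gamma(s-w)}{\Gamma(s)}y^{-w}\,dw$ (equivalently, writing $(1+y)^{-s}$ as a $G_{1,1}^{1,1}$ and invoking the product-of-two-Meijer-$G$ integral, e.g.\ \cite[Eq.~(7.811.1)]{jeffrey2007table} or \cite[Eq.~(21)]{adamchik1990algorithm}), swapping the $y$- and $w$-integrals, and using the Mellin transform of $G_{0,2}^{2,0}$, the $y$-integration contributes an extra gamma pair $\Gamma(N_{T}-w)\Gamma(N_{R}-w)$ together with a power $\big(\frac{N_{T}}{\gamma_{k}}\big)^{w}$. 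What remains is $\frac{1}{\Gamma(N_{T})\Gamma(N_{R})\Gamma(s)}\cdot\frac{1}{2\pi{\rm i}}\int \Gamma(w)\Gamma(s-w)\Gamma(N_{T}-w)\Gamma(N_{R}-w)\big(\frac{N_{T}}{\gamma_{k}}\big)^{w}dw$, and the inner four-gamma Mellin--Barnes integral is by definition $G_{1,3}^{3,1}\big(\frac{N_{T}}{\gamma_{k}}\big)$ with upper parameter $1$ and lower parameters $s,N_{T},N_{R}$. This is precisely the $k$-th factor in (\ref{eqn:CDF_Z_k_der}); multiplying over $k=1,\dots,K$ completes the derivation.

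I expect the main obstacle to be the analytic bookkeeping rather than any isolated identity. One must: (i) pin down the strip of $s$ in which the manipulations are licit --- since $Y_{k}\ge 1$, $\mathbb{E}[Y_{k}^{-s}]$ is finite (in fact entire) for every $s$, and the Bessel-type decay of $f_{X_{k}}$ at infinity, cf.\ (\ref{eqn:pdf_X_l}), gives absolute convergence, but the $w$-contour in the convolution must separate the left poles of $\Gamma(w)$ from the right poles of $\Gamma(s-w),\Gamma(N_{T}-w),\Gamma(N_{R}-w)$, which constrains $\mathrm{Re}(s)$, and the outer abscissa must be positioned relative to the pole of $1/s$ at $s=0$ so that the step-function representation reproduces $\Pr(X_{IR}<x)$ and not its complement --- these together fix the admissible value of the constant $c$; (ii) justify interchanging the $y$-integral with the inner Mellin--Barnes contour, which is exactly the content of the product-of-Meijer-$G$ convolution formula and reduces to checking its parameter/argument hypotheses; and (iii) justify swapping expectation with the outer contour integral, which follows from the exponential decay of the $\Gamma$-factors along vertical lines, so that the integrand is absolutely integrable. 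Once these three points are settled, (\ref{eqn:CDF_Z_k_der}) follows by direct substitution.
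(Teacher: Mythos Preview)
Your proposal is correct and follows essentially the same route as the paper: both exploit the multiplicativity of the Mellin transform over the independent factors $Y_k=1+\tfrac{\gamma_k}{N_T}X_k$, evaluate the single-block transform $\int_0^\infty(1+\tfrac{\gamma_k}{N_T}x)^{-s}f_{X_k}(x)\,dx$ as a $G_{1,3}^{3,1}$ via a Meijer-$G$ integral identity, and then invert. The only cosmetic differences are that the paper quotes \cite[Eq.~(7.811.5)]{jeffrey2007table} directly for the single-block integral whereas you re-derive it from the Mellin--Barnes representation of $(1+y)^{-s}$, and the paper writes the CDF through the standard inverse-Mellin relation ${F_{X_{IR}}}(x)=\mathcal{M}^{-1}\!\big[-\tfrac{1}{s}\varphi(s+1)\big](x)$ rather than via the step-function identity.
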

\begin{proof}
Please see Appendix \ref{eqn:IR}.
\end{proof}
Therefore, the outage probability of the MIMO-HARQ-IR scheme can be obtained by substituting $x=2^R$ into (\ref{eqn:CDF_Z_k_der}). Therefore, the outage probability of the MIMO-HARQ-IR scheme can also be expressed as an inverse Laplace transform, which can be numerically evaluated.

However, due to the involvement of Meijer G-function, it is intractable to extract meaningful insights from (\ref{eqn:CDF_Z_k_der}). To address this issue, the asymptotic outage probability of the MIMO-HARQ-IR scheme in the high SNR region is derived in the following theorem.
\begin{theorem}\label{The:IR_2}
Under high SNR, i.e., $\gamma_k\to\infty$, the outage probability of the MIMO-HARQ-IR scheme is asymptotically equal to
\begin{equation}\label{eqn:as_IR}
P_{out}^{IR} \simeq \left\{ {\begin{array}{*{20}{c}}
{\prod\limits_{k = 1}^K {\frac{{\Gamma \left( \tau\right){{\left( {\frac{{{N_T}}}{{{\gamma _k}}}} \right)}^{{N_R}}}}}{{\Gamma \left( {{N_T}} \right)}}} G_{K + 1,K + 1}^{0,K + 1}\left( {\left. {\begin{array}{*{20}{c}}
{1,{N_R} + 1, \cdots ,{N_R} + 1}\\
{ 1, \cdots ,1,0}
\end{array}} \right|2^R} \right),}&{\textup{$N_{T} > N_{R}$}},\\
{\prod\limits_{k = 1}^K {\frac{{\Gamma \left( \tau \right){{\left( {\frac{{{N_T}}}{{{\gamma _k}}}} \right)}^{{N_T}}}}}{{\Gamma \left( {{N_R}} \right)}}} G_{K + 1,K + 1}^{0,K + 1}\left( {\left. {\begin{array}{*{20}{c}}
{1,{N_T} + 1, \cdots ,{N_T} + 1}\\
{ 1, \cdots ,1,0}
\end{array}} \right|2^R} \right),}&{\textup{$N_{T} < N_{R}$}},\\
{\prod\limits_{k = 1}^K {\frac{{{{\left( {\frac{{{N_T}}}{{{\gamma _k}}}} \right)}^{{N_R}}}\ln  {{\gamma _k}} }}{{\Gamma \left( {{N_T}} \right)}}} G_{K + 1,K + 1}^{0,K + 1}\left( {\left. {\begin{array}{*{20}{c}}
{1,{N_R} + 1, \cdots ,{N_R} + 1}\\
{ 1, \cdots , 1,0}
\end{array}} \right|2^R} \right)},&{\textup{$N_{T} = N_{R}$}}.                           \\
\end{array}} \right.
\end{equation}
\end{theorem}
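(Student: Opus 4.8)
The plan is to bypass the Mellin--Barnes form of Theorem~\ref{lem:pcdf_Z_K} and argue directly from the probabilistic representation $P_{out}^{IR}=\Pr\!\left(\prod_{k=1}^{K}(1+\tfrac{\gamma_k}{N_T}X_k)<2^{R}\right)$ in \eqref{eqn:out_prob}, using only the density \eqref{eqn:pdf_X_l} of the per-round gain $X_k=\|{\bf u}_k\|^{2}\|{\bf v}_k\|^{2}$. The opening remark is that the outage event $\prod_k(1+\tfrac{\gamma_k}{N_T}X_k)<2^{R}$ forces $X_k<\tfrac{N_T}{\gamma_k}(2^{R}-1)$ for every $k$, so as $\gamma_k\to\infty$ it is confined to a box that shrinks to the origin. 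Hence $f_{X_k}$ may be replaced by its leading small-argument term. Using $K_\nu(z)\sim\tfrac12\Gamma(\nu)(2/z)^{\nu}$ for $\nu>0$ and $K_0(z)=\ln(1/z)+O(1)$ as $z\to0^{+}$ in \eqref{eqn:pdf_X_l}, with $d=\min(N_T,N_R)$, this yields $f_{X_k}(x)\simeq\tfrac{\Gamma(\tau)}{\Gamma(N_T)\Gamma(N_R)}x^{\,d-1}$ for $N_T\neq N_R$ and $f_{X_k}(x)\simeq\tfrac{1}{\Gamma(N_T)^{2}}x^{N_T-1}\ln(1/x)$ for $N_T=N_R$ --- the same small-argument behaviour that drives Theorem~\ref{The:asy_I}. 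Since the relative error of the replacement is $o(1)$ uniformly over the shrinking region, it passes outside the $K$-fold integral.

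Next I would substitute $y_k=\tfrac{\gamma_k}{N_T}X_k$. This turns the integration domain into the SNR-free set $\mathcal D=\{y_k>0:\prod_{k}(1+y_k)<2^{R}\}$ and converts each $f_{X_k}(x_k)\,dx_k$ into $\tfrac{\Gamma(\tau)}{\Gamma(N_T)\Gamma(N_R)}(\tfrac{N_T}{\gamma_k})^{d}y_k^{\,d-1}dy_k$ (with an extra $\ln\gamma_k$, plus subdominant $\ln(N_T y_k)$ corrections, when $N_T=N_R$). All the SNR dependence thus collects into a prefactor, leaving the constant $\phi_K(z):=\int_{\{y_k>0,\ \prod_k(1+y_k)<z\}}\prod_{k=1}^{K}y_k^{\,d-1}\,dy_1\cdots dy_K$, to be evaluated at $z=2^{R}$.

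The technical core is a closed form for $\phi_K$. Integrating out one coordinate at a time and putting $u=1+y_K$ gives the Mellin-convolution recursion $\phi_K(z)=\int_{1}^{z}(u-1)^{\,d-1}\phi_{K-1}(z/u)\,du$, with $\phi_1(z)=(z-1)^{d}/d$ on $z>1$ and $\phi_K\equiv0$ on $(0,1]$. Passing to Mellin transforms converts the convolution into a product: with kernel $b(u)=u(u-1)^{d-1}$ on $u>1$ one finds $\mathcal M[b](s)=\Gamma(d)\Gamma(-s-d)/\Gamma(-s)$ and $\mathcal M[\phi_1](s)=-\tfrac1s\mathcal M[b](s)$, whence $\mathcal M[\phi_K](s)=-\tfrac{(\Gamma(d))^{K}}{s}\bigl(\Gamma(-s-d)/\Gamma(-s)\bigr)^{K}$. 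Inverting, relabelling $s\mapsto-s$, and writing $1/s=\Gamma(s)/\Gamma(s+1)$ to expose the parameters identifies $\phi_K(z)=(\Gamma(d))^{K}\,G_{K+1,K+1}^{0,K+1}\!\left( {\left. {\begin{array}{*{20}{c}}1,d+1,\cdots,d+1\\1,\cdots,1,0\end{array}} \right|z} \right)$, i.e.\ exactly $(\Gamma(d))^{K}$ times the Meijer $G$-function in \eqref{eqn:as_IR}. Specialising $d=N_R$, $d=N_T$, or $d=N_T=N_R$ (in the last case with the accompanying $\ln\gamma_k$ prefactor) and performing the obvious Gamma-function cancellations then reproduces the three lines of \eqref{eqn:as_IR}.

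The step I expect to be the real obstacle is not any of these manipulations but the rigorous justification that replacing $f_{X_k}$ by its first asymptotic term is harmless: one must show this produces an $o(1)$ \emph{multiplicative} error uniformly over the box $\{X_k<\tfrac{N_T}{\gamma_k}(2^{R}-1)\}$ and that the $K$ such errors do not compound. The equal-antenna case is the delicate one, because there even the leading term carries the slowly varying factor $\ln(1/X_k)=\ln\gamma_k-\ln(N_T y_k)$; one must argue that after the change of variables the $\ln\gamma_k$ part dominates while the $\ln(N_T y_k)$ part, being integrable against $y_k^{N_T-1}$ on the bounded set $\mathcal D$, is strictly lower order --- this is exactly why the third line of \eqref{eqn:as_IR} acquires $\prod_k\ln\gamma_k$. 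A minor bookkeeping point is matching the inverse Mellin transform of $\tfrac1s(\Gamma(s-d)/\Gamma(s))^{K}$ to the precise $G$-function parameter list, which is routine once $1/s=\Gamma(s)/\Gamma(s+1)$ is inserted. As an alternative one could instead start from Theorem~\ref{lem:pcdf_Z_K} and replace each inner $G_{1,3}^{3,1}(\,\cdot\mid N_T/\gamma_k)$ by its small-argument residue expansion --- the collision of two poles when $N_T=N_R$ producing the logarithm --- and collapse the surviving contour integral; but that requires controlling the interchange of $\gamma_k\to\infty$ with the contour integral, which I find less transparent than the probabilistic route.
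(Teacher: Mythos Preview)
Your argument is correct and complete at the level of rigour the paper works at. It is, however, \emph{not} the route the paper takes. The paper's proof is precisely the alternative you mention in your last paragraph: it starts from the exact Mellin--Barnes representation of Theorem~\ref{lem:pcdf_Z_K}, expands the inner $G_{1,3}^{3,1}\!\left(\left.\begin{smallmatrix}1\\ s,N_T,N_R\end{smallmatrix}\right|\tfrac{N_T}{\gamma_k}\right)$ via its hypergeometric series (for $N_T\neq N_R$) or via a residue at the double pole $t=N_R$ (for $N_T=N_R$, which is where the $\ln\gamma_k$ enters), keeps the dominant term $\Gamma(s-d)\Gamma(\tau)\Gamma(d)(N_T/\gamma_k)^{d}$, and then collapses the remaining contour integral in $s$ back into the $G_{K+1,K+1}^{0,K+1}$ of \eqref{eqn:as_IR}.

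The trade-off is clear. Your probabilistic route makes the structure transparent: the outage region is a shrinking box, the SNR factorises out after the substitution $y_k=\tfrac{\gamma_k}{N_T}X_k$, and the residual constant $\phi_K(2^R)$ is computed by an honest Mellin convolution --- which also explains \emph{why} a $G_{K+1,K+1}^{0,K+1}$ appears, rather than merely verifying it. The paper's route is shorter on the page because it leverages the exact result already in hand, but the asymptotic replacement happens inside a contour integral, so the interchange of limit and integral is left implicit (as you note). Your careful treatment of the $N_T=N_R$ case --- splitting $\ln(1/X_k)=\ln\gamma_k-\ln(N_Ty_k)$ and arguing the second piece is subdominant because it is integrable against $y_k^{N_T-1}$ over the bounded set $\mathcal D$ --- is in fact more explicit than the paper's residue computation, which simply drops the non-logarithmic term in \eqref{eqn:meijer_G_asymp2} without comment.
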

\begin{proof}
Please see Appendix \ref{app:IR_2}.
\end{proof}
Comparing with the exact outage expressions for the three MIMO-HARQ schemes, the corresponding asymptotic expressions, i.e., (\ref{eqn:asy I}), (\ref{eqn:as_cc}), and (\ref{eqn:as_IR}), are more concise and useful, which will be detailed in the next section.
\section{Discussions of Asymptotic Results}\label{sec:div}
To facilitate our discussion, we assume equal transmit SNRs, i.e., $\gamma_1=\cdots=\gamma_K=\gamma$. By identifying the asymptotic results (\ref{eqn:asy I}), (\ref{eqn:as_cc}), and (\ref{eqn:as_IR}) with \cite[eq.(3.158)]{tse2005fundamentals}, \cite[eq.(1)]{wang2003simple}, the asymptotic outage probabilities of MIMO-HARQ systems with keyhole effect as $\gamma\to \infty$ can be generalized as \cite{shi2017asymptotic}
\begin{equation}\label{eqn:pro}
P_{out}=\left(\mathcal{C}(R)\gamma\right)^{-d}(\ln \gamma)^{K\mathbb I(N_t-N_r)}+{o}(\gamma^{-d}(\ln \gamma)^{K\mathbb I(N_t-N_r)}),
\end{equation}
where $\mathbb I(N_t-N_r)$ denotes the indicator function, $\mathcal{C}(R)$ represents the modulation and coding gain, $d$ stands for the diversity order, and $o(\gamma^{-d})$ denotes the higher order terms. In what follows, the diversity order $d$, and the modulation and coding gain $\mathcal{C}(R)$ of the three types of MIMO-HARQ schemes are individually discussed. Moreover, the asymptotic property of MIMO-HARQ schemes in the large-scale array regime is also examined.
\subsection{Diversity Order}
The diversity order measures the degrees of freedom of communication systems, which is defined as the declining slope of the outage probability with regard to the transmit SNR on a log-log scale in the high SNR regime as \cite{chelli2014performance}
\begin{equation}\label{eqn:div}
d =-\lim\limits_{\gamma\to\infty}\frac{{\log}\left(P_{out}\right)}{{\log}\left(\gamma\right)}.
\end{equation}
By substituting the asymptotic outage expressions (i.e., (\ref{eqn:asy I}), (\ref{eqn:as_cc}), and (\ref{eqn:as_IR})) into (\ref{eqn:div}), we can obtain the diversity orders for three MIMO-HARQ schemes are equal and are given by $d=K\mathrm{min}\left(N_T,N_R\right)$, where $K$ and $\mathrm{min}(N_T,N_R)$ represent the achievable time and spatial diversity order, respectively. As proved in \cite{yang2021asymptotic}, the diversity order of MIMO systems without keyhole effect is $N_TN_R$. However, the spatial diversity order of the MIMO-HARQ with keyhole effect is the minimum of the numbers of transmit and receive antennas. Thus, full spatial diversity order is unreachable due to the rank deficiency of the keyhole effect. This result is also consistent with \cite{sanayei2007antenna}. Besides, as substantiated in \cite{shi2017asymptotic}, the time diversity order that can be achieved by HARQ is determined by the maximum number of transmissions, i.e., $K$. Clearly, full time diversity order can be achieved from using HARQ for the proposed scheme.  This confirmed the validity of using HARQ to combat keyhole effect. 
\subsection{Modulation and Coding Gain}\label{Coding_gain}


\textcolor[rgb]{0.00,0.07,1.00}{The modulation and coding gain $\mathcal{C}(R)$ quantifies how much transmit power can be reduced by using a certain modulation and coding scheme (MCS) to achieve the same outage performance. In other words, $\mathcal C(R)$ characterizes how much gain can be benefited from the adopted MCS \footnote{\textcolor[rgb]{0.00,0.07,1.00}{For example, if there are two candidate MCSs for a MIMO-HARQ assisted V2V communication system, each having the modulation and coding gain $\mathcal{C}_i(R)$, where $i$ is the index of the MCS and $i=1,2$. To achieve the same outage target $P_{out} = \varepsilon$, the average transmit SNR for each MCS is required to be $\varepsilon = \left(\mathcal{C}_1(R)\gamma_i\right)^{-d}$, where the outage probability is approximated by using $P_{out} \approx \left(\mathcal{C}(R)\gamma\right)^{-d}$ according to \eqref{eqn:pro} and the term $\ln \gamma$ is ignored due to the dominance of $\gamma$ by comparing to $\ln \gamma$ in the high SNR regime. Thus, if we change the adopted MCS from 1 to 2 while guaranteeing the same outage target, the SNR reduction for MCS 2 is given by $10{\log _{10}}{\gamma _2} - 10{\log _{10}}{\gamma _1} \approx 10{\log _{10}}{{\cal C}_1}(R) - 10{\log _{10}}{{\cal C}_2}(R)$. This result is consistent with the observation in Figs. \ref{fig:R23} and \ref{fig:NT}.}}.} With the asymptotic results, the explicit expression of $\mathcal{C}(R)$ can be obtained for different relationships between $N_T$ and $N_R$. 
To simplify our discussion, we only consider the case of $N_T=N_R$. By plugging (\ref{eqn:asy I}), (\ref{eqn:as_cc}), and (\ref{eqn:as_IR}) into (\ref{eqn:pro}), we can obtain the modulation and coding gain for each MIMO-HARQ scheme in the case of $N_T=N_R$ as
\begin{equation}\label{eqn:coding_gain}
\mathcal{C}(R) = \left\{ {\begin{array}{*{20}{c}}
\frac{1}{{N_{T}}\left(2^R-1\right)}(N_T{({\Gamma \left( {{N_T}} \right)})}^2)^{\frac{1}{N_T}},&{\textup{Type-I}},\\
\frac{1}{{{N_T}({2^R} - 1)}}{\left( {\left( {K{N_T}} \right)!} \right)^{\frac{1}{{K{N_T}}}}}{\left( {\Gamma \left( {{N_T}} \right)} \right)^{\frac{1}{{{N_T}}}}},&{\textup{HARQ-CC}},\\
\frac{1}{{N_T}}g{\left( R \right)^{ - \frac{1}{KN_T}}}{\left( {{{\Gamma \left( {{N_T}} \right)}}} \right)^{\frac{1}{{N_T}}}},&{\textup{HARQ-IR}}, \\
\end{array}} \right.
\end{equation}
where ${g}(R)=G_{K + 1,K + 1}^{0,K + 1}\left( {\left. {\begin{array}{*{20}{c}}
{1,{N_R} + 1, \cdots ,{N_R} + 1}\\
{ 1, \cdots , 1,0}
\end{array}} \right|2^R} \right)$.
The convexity and increasing monotonicity of ${g}(R)$ w.r.t. $R$ has been proved in \cite[Lemma 4]{shi2017asymptotic}. It is evident from (\ref{eqn:coding_gain}) that the modulation and coding gain is a decreasing function w.r.t. the transmission rate $R$. Moreover, the coding gains of the three MIMO-HARQ schemes follow the relationship as $\mathcal{C}_{IR}(R)\ge \mathcal{C}_{CC}(R)\ge \mathcal{C}_{I}(R)$, where $\mathcal{C}_{I}(R)$, $ \mathcal{C}_{CC}(R)$ and $\mathcal{C}_{IR}(R)$ denote the coding gains of Type-I HARQ, HARQ-CC and HARQ-IR, respectively. The relationship between $\mathcal{C}_{CC}(R)$ and $ \mathcal{C}_{I}(R)$ directly follows from \eqref{eqn:coding_gain}. Besides, $\mathcal{C}_{IR}(R)$ is the maximum among them, which can be proved by using the integral form of ${g}(R)$ in \cite[eq. (26)]{shi2017asymptotic}. The similar results also apply to the cases of $N_T>N_R$ and $N_T<N_R$. The details are omitted here to avoid redundancy. \textcolor[rgb]{0.00,0.07,1.00}{In summary, the performance of the MIMO-HARQ schemes commonly depends on their implementation complexity. More specifically, the MIMO-HARQ-IR scheme performs the best in terms of the modulation and coding gain at the cost of its highest coding complexity, while the MIMO-Type-I HARQ scheme exhibits the worst performance due to its simple coding mechanism. Nevertheless, the MIMO-Type-I HARQ scheme has the lowest hardware requirement by comparing to the other two schemes, because it does not require extra buffer to store the failed packets. Furthermore, it is worth highlighting that the MIMO-HARQ-CC scheme can attain a balanced tradeoff between the complexity and the performance.}
\subsection{Large-Scale Array Regime}\label{Large}

It is well known that the HARQ-IR scheme outperforms the HARQ-CC scheme in terms of the spectral efficiency, albeit at the cost of very high coding complexity. It is worth mentioning that the HARQ-CC scheme can achieve a balanced tradeoff between the spectral efficiency and the coding complexity. Fortunately, for large-scale antenna systems, it can be proved that the MIMO-HARQ-CC scheme can achieve a comparable performance as the MIMO-HARQ-IR scheme, as shown in the following theorem. 
\begin{theorem}\label{the:las}
For a large number of antenna arrays at the transmitter, i.e., $N_T\to \infty$, the gap between the accumulated mutual informations of the MIMO-HARQ-CC and the MIMO-HARQ-IR schemes converges almost surely to zero, namely,
\begin{equation}\label{eqn:cov}
\Pr \left\{ {\mathop {\lim }\limits_{{N_T} \to \infty } \left( {{I_{IR}} - {I_{CC}}} \right) = 0} \right\} = 1.
\end{equation}
This result indicates that the MIMO-HARQ-CC scheme can provide a comparable capacity/outage performance as the MIMO-HARQ-IR scheme in the large-scale array regime.
\end{theorem}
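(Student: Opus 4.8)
The plan is to show that, as $N_T\to\infty$, both $I_{IR}$ and $I_{CC}$ converge almost surely to the \emph{same} random limit $L:=\sum_{k=1}^{K}\log_2\bigl(1+\gamma_k\|\mathbf u_k\|^2\bigr)$, which yields \eqref{eqn:cov} at once. I would work on a common probability space carrying, for each $k$, an infinite i.i.d.\ sequence $v_{k,1},v_{k,2},\dots\sim\mathcal{CN}(0,1)$ and a vector $\mathbf u_k\sim\mathcal{CN}(\mathbf 0,\mathbf I_{N_R})$, all mutually independent; for a given $N_T$, $\mathbf v_k$ is the length-$N_T$ truncation of its sequence, whereas $\mathbf u_k$ does not vary with $N_T$ and satisfies $0<\|\mathbf u_k\|^2<\infty$ almost surely. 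First I would dispose of $I_{IR}$: by \eqref{eqn:out_prob}, $I_{IR}=\sum_{k=1}^{K}\log_2\bigl(1+\tfrac{\gamma_k}{N_T}\|\mathbf u_k\|^2\|\mathbf v_k\|^2\bigr)$, and since $\tfrac1{N_T}\|\mathbf v_k\|^2=\tfrac1{N_T}\sum_{\ell=1}^{N_T}|v_{k,\ell}|^2\to\mathbb E|v_{k,1}|^2=1$ almost surely by the strong law of large numbers (SLLN), each summand tends a.s.\ to $\log_2(1+\gamma_k\|\mathbf u_k\|^2)$, hence $I_{IR}\to L$ almost surely.

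The crux is to control $I_{CC}$, whose determinant in \eqref{eqn:ICC_fur} is of the growing size $N_T\times N_T$. Writing $\mathbf V=[\mathbf v_1,\dots,\mathbf v_K]\in\mathbb C^{N_T\times K}$ and $\mathbf D=\operatorname{diag}\bigl(\tfrac{\gamma_1}{N_T}\|\mathbf u_1\|^2,\dots,\tfrac{\gamma_K}{N_T}\|\mathbf u_K\|^2\bigr)$, \eqref{eqn:ICC_fur} becomes $I_{CC}=\log_2\det(\mathbf I_{N_T}+\mathbf V\mathbf D\mathbf V^{\mathrm H})$, and the identity $\det(\mathbf I+\mathbf A\mathbf B)=\det(\mathbf I+\mathbf B\mathbf A)$ already invoked in \eqref{eqn:out_probI} reduces this to a \emph{fixed} $K\times K$ determinant, $I_{CC}=\log_2\det\bigl(\mathbf I_K+\mathbf D^{1/2}(\mathbf V^{\mathrm H}\mathbf V)\mathbf D^{1/2}\bigr)=\log_2\det\bigl(\mathbf I_K+(N_T\mathbf D)^{1/2}\bigl(\tfrac1{N_T}\mathbf V^{\mathrm H}\mathbf V\bigr)(N_T\mathbf D)^{1/2}\bigr)$, where $N_T\mathbf D=\operatorname{diag}(\gamma_k\|\mathbf u_k\|^2)$ is fixed. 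It then remains to show $\tfrac1{N_T}\mathbf V^{\mathrm H}\mathbf V\to\mathbf I_K$ almost surely: the diagonal entries $\tfrac1{N_T}\|\mathbf v_i\|^2\to1$ a.s.\ as above, and for $i\neq j$ the entry $\tfrac1{N_T}\mathbf v_i^{\mathrm H}\mathbf v_j=\tfrac1{N_T}\sum_{\ell=1}^{N_T}\overline{v_{i,\ell}}\,v_{j,\ell}$ is an average of i.i.d.\ zero-mean terms with finite variance $\mathbb E|v_{i,1}|^2\mathbb E|v_{j,1}|^2=1$, hence $\to0$ a.s.\ by the SLLN. Entrywise convergence on the fixed space $\mathbb C^{K\times K}$ is convergence in norm, so $\mathbf I_K+(N_T\mathbf D)^{1/2}(\tfrac1{N_T}\mathbf V^{\mathrm H}\mathbf V)(N_T\mathbf D)^{1/2}\to\operatorname{diag}(1+\gamma_k\|\mathbf u_k\|^2)$ a.s., and continuity of $\det(\cdot)$ and $\log_2(\cdot)$ gives $I_{CC}\to\sum_{k=1}^{K}\log_2(1+\gamma_k\|\mathbf u_k\|^2)=L$ almost surely.

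Combining the two limits yields $I_{IR}-I_{CC}\to L-L=0$ almost surely, which is \eqref{eqn:cov}; the capacity/outage statement follows since each mutual information determines its outage probability through \eqref{eqn:op}. A useful sanity check: Lemma~\ref{lemma1} gives $I_{CC}\ge\log_2\bigl(1+\sum_k\tfrac{\gamma_k}{N_T}X_k\bigr)$, while the PSD inequality $\det(\mathbf I+\mathbf A+\mathbf B)\le\det(\mathbf I+\mathbf A)\det(\mathbf I+\mathbf B)$ gives $I_{CC}\le I_{IR}$, so $0\le I_{IR}-I_{CC}\le\sum_k\log_2(1+\tfrac{\gamma_k}{N_T}X_k)-\log_2(1+\sum_k\tfrac{\gamma_k}{N_T}X_k)$; however the two sides of this sandwich tend to different limits, so it is \emph{not} tight enough, which is precisely why the explicit $K\times K$ reduction is required. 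The main obstacle is thus that single dimension-reduction step plus the care needed to justify passing almost-sure convergence of the random matrix entries through the determinant; once the problem has been brought down to a $K\times K$ matrix with a.s.\ convergent entries, everything else is the continuous mapping theorem.
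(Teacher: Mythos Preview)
Your argument is correct and takes a genuinely different route from the paper's proof. The paper works directly with the $N_T\times N_T$ determinant $\det\bigl(\mathbf I_{N_T}+\sum_k\alpha_k\mathbf v_k\mathbf v_k^{\mathrm H}\bigr)$: it peels off one rank-one term at a time via $\det(\mathbf A_K)=\det(\mathbf A_{K-1})\bigl(1+\alpha_K\mathbf v_K^{\mathrm H}\mathbf A_{K-1}^{-1}\mathbf v_K\bigr)$, expands $\mathbf A_{K-1}^{-1}$ recursively with the Woodbury identity, and then invokes random-matrix trace/bilinear-form lemmas (Couillet--Debbah) to kill the cross terms $\tfrac1{N_T}\mathbf v_K^{\mathrm H}\mathbf A_{k-1}^{-1}\mathbf v_k$, $k\neq K$. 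This yields directly that $2^{I_{CC}}$ and $2^{I_{IR}}$ coalesce, without ever identifying the common limit. Your approach instead applies the Sylvester identity \emph{once} to collapse $I_{CC}$ to a fixed $K\times K$ determinant, then shows $\tfrac1{N_T}\mathbf V^{\mathrm H}\mathbf V\to\mathbf I_K$ entrywise by the classical SLLN and finishes with continuous mapping. The payoff is twofold: you avoid any appeal to random-matrix theory beyond the scalar SLLN, and you obtain the explicit common limit $L=\sum_k\log_2(1+\gamma_k\|\mathbf u_k\|^2)$ for both $I_{IR}$ and $I_{CC}$, which is a sharper statement than the paper's. The paper's recursive/Woodbury machinery, on the other hand, would transfer more readily to settings where the $\mathbf v_k$ are correlated or the resolvent $\mathbf A_{k-1}^{-1}$ is non-trivial, since the trace-lemma framework is built for exactly that; for the i.i.d.\ keyhole model at hand, your reduction is the cleaner argument.
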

\begin{proof}
 Please see Appendix \ref{app:Le_2}.
\end{proof}
With Theorem \ref{the:las}, the MIMO-HARQ-CC scheme is more appealing for practical V2V communications due to its lower computational complexity and hardware requirement by comparing to the MIMO-HARQ-IR scheme.


\section{Numerical Analysis}\label{sec:num}
In this section, numerical results and Monte-Carlo simulations are presented for verifications and discussions. Unless otherwise specified, the system parameters are set as $R=3$ bps/Hz, $N_T=N_R=2$ and $K=3$. Moreover, we assume that the transmit SNR are identical across all HARQ rounds, i.e., $\gamma_1=\cdots=\gamma_K=\gamma$ in the sequel, and the labels ``Sim.'', ``Exa.'' and ``Asy.'' represent the simulated, the exact and the asymptotic outage probabilities, respectively.

According to Section \ref{sec:out}, Figs. \ref{fig:22} - \ref{fig:32} show the simulated, exact and asymptotic outage probabilities of the three MIMO-HARQ schemes versus the transmit SNR under different relationships between the numbers of transmit and receive antennas. It is observed from Figs. \ref{fig:22} - \ref{fig:32} that the exact results match well with the simulated ones for the MIMO-Type-I and the MIMO-HARQ-IR schemes. Moreover, it can be seen from Figs. \ref{fig:22} - \ref{fig:32} that the approximate error between the exact and simulated outage probabilities of the MIMO-HARQ-CC scheme is acceptable. In addition, the asymptotic outage curves of MIMO-HARQ schemes are parallel to each other under high SNR. This observation is consistent with the results of the diversity order, which reflects the decreasing slope of the outage curve. Although the three MIMO-HARQ schemes have the same diversity order, the HARQ-IR scheme achieves the best outage performance among them because of its highest coding and modulation gain. \textcolor[rgb]{0.00,0.07,1.00}{Furthermore, by comparing to the MIMO-V2V communication scheme without using HARQ (labeled as ``No-HARQ'' in Figs. \ref{fig:22} - \ref{fig:32}), the proposed HARQ-assisted schemes exhibit a superior performance.}
\begin{figure}[!htb]
        \centering
        \includegraphics[width=3.5in]{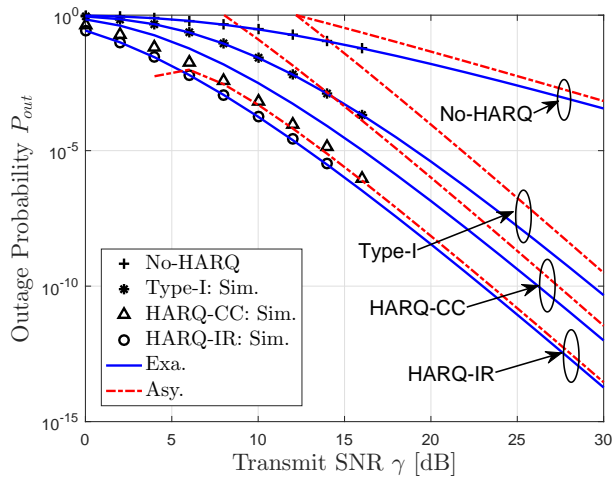}
        \caption{The outage probability $P_{out}$ versus the transmit SNR $\gamma$ with  $N_{T}=N_{R}=2$.}\label{fig:22}
\end{figure}
\begin{figure}[!htb]
    \centering
    \includegraphics[width=3.5in]{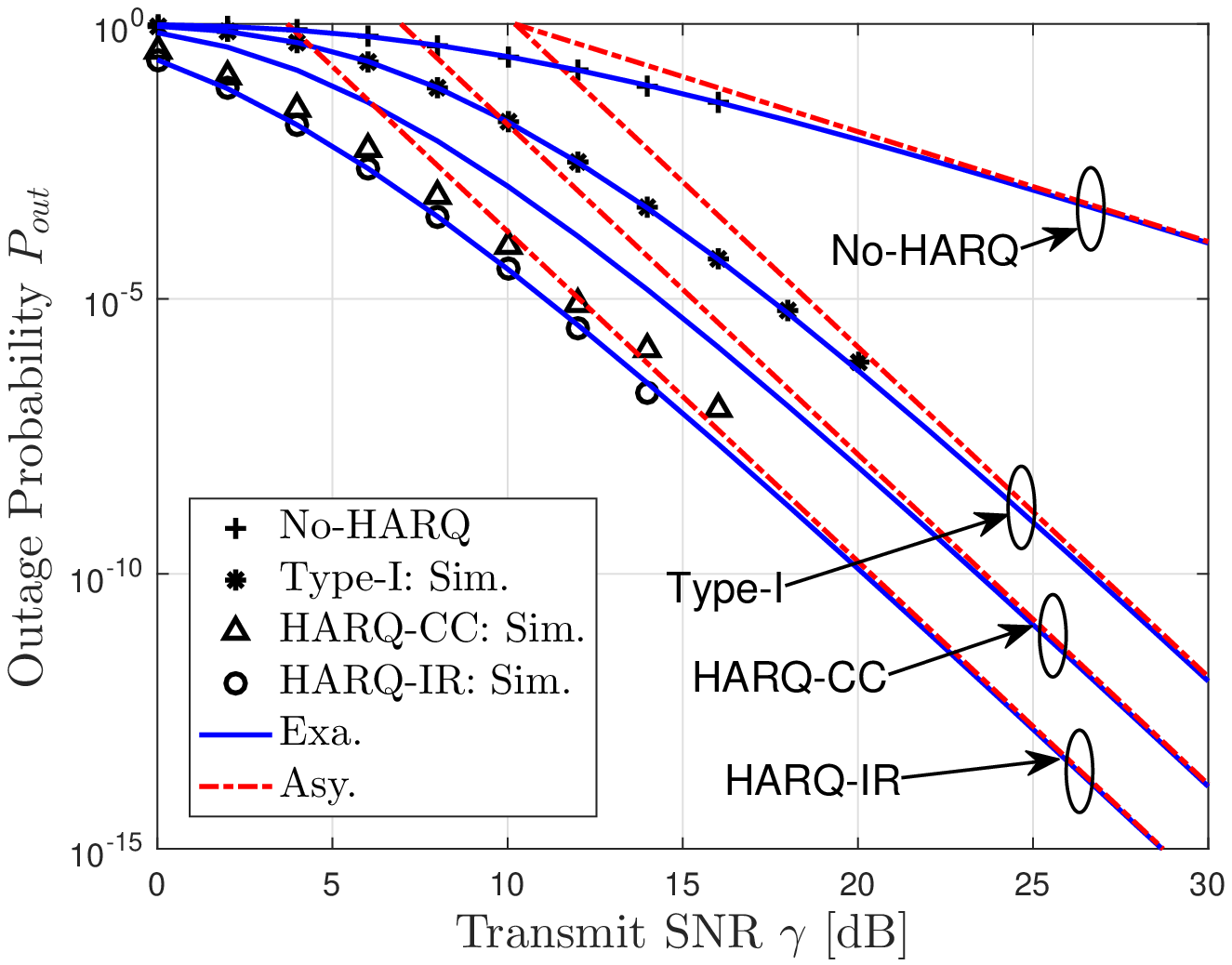}
    \caption{The outage probability $P_{out}$ versus the transmit SNR $\gamma$ with $N_{T}=3$ and $N_{R}=2$.}\label{fig:23}
\end{figure}
\begin{figure}[!htb]
    \centering
    \includegraphics[width=3.5in]{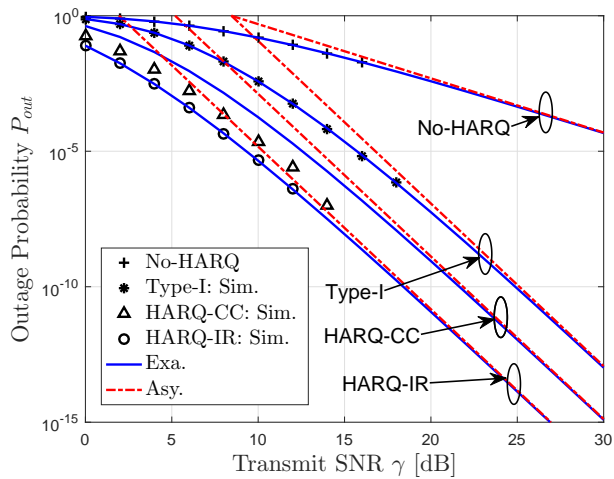}
    \caption{The outage probability $P_{out}$ versus the transmit SNR $\gamma$ with $N_{T}=2$ and $N_{R}=3$.}\label{fig:32}
\end{figure}

Figs. \ref{fig:R22} depicts the outage probability of the three MIMO-HARQ schemes versus the transmission rate $R$. As observed from Fig. \ref{fig:R22}, the exact outage curves coincide with the simulated outage ones except for the MIMO-HARQ-CC scheme, this is because the outage probability of the MIMO-HARQ-CC scheme is derived by using an upper bound. Besides, it can be seen that the outage probability is an increasing function of the transmission rate $R$, this is essentially due to the tradeoff between the throughput and reliability. Hence, the transmission rate should be properly chosen in practical V2V communication systems. Fortunately, the increasing monotonicity and convexity of the asymptotic outage probability with respect to $R$ can greatly ease the optimal rate selection. 
\begin{figure}
    \centering
    \includegraphics[width=3.5in]{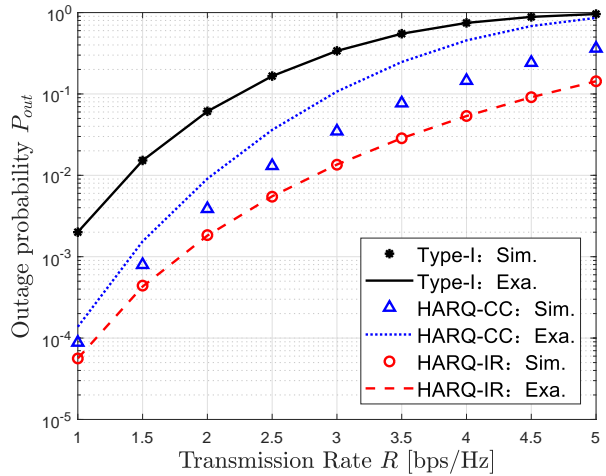}
    \caption{The outage probability $P_{out}$ versus the transmission rate $R$ by setting parameters as $\gamma=5$ dB.}\label{fig:R22}
\end{figure}

Additionally, Fig. \ref{fig:R23} shows the impacts of the transmission rate $R$  on the modulation and coding gain $\mathcal{C}(R)$ for the three MIMO-HARQ schemes. As expected, the modulation and coding gain decreases with the increase of the transmission rate. Besides, one can observe that the  HARQ-IR scheme obtains the maximum modulation and coding gain among the three MIMO-HARQ  schemes, and the MIMO-HARQ-CC scheme performs better than the MIMO-Type-I HARQ scheme. The observations in Fig. \ref{fig:R23}  are consistent with the asymptotic analysis in Section \ref{Coding_gain}.
\begin{figure}
    \centering
    \includegraphics[width=3.5in]{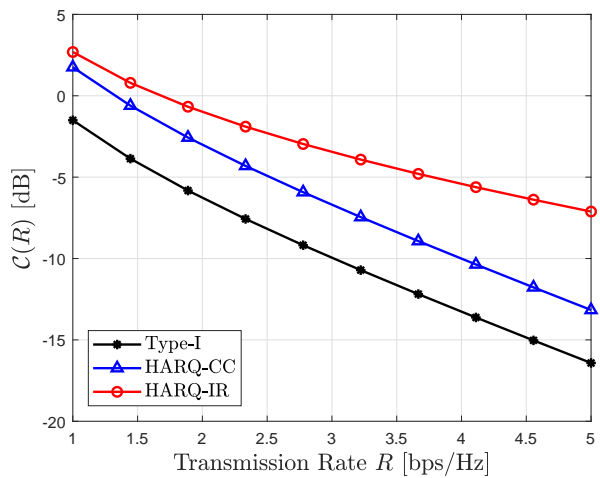}
    \caption{The modulation and coding gain $\mathcal{C}(R)$ versus the transmission rate $R$ by setting parameters as $N_{T}=N_{R}=2$ and $K=3$.}\label{fig:R23}
\end{figure}

To investigate the impact of the number of antennas on the outage performance, Fig. \ref{fig:NT} plots the outage probability against the number of transmit antennas $N_T$. It is shown that the increase of $N_T$ would improve the outage performance. Meanwhile, as the number of transmit antennas increases, there exists an outage floor for MIMO-HARQ schemes, which is only determined by the number of receive antennas. This result can be proved by applying the deterministic equivalent ${\left\| {{{\bf{v}}_k}} \right\|^2}/{N_T}\xrightarrow[{N_T \to \infty }]{a.s.}  1$ to \eqref{eqn:covsuff}, $k=1,\cdots,K$. Besides, it is found that the MIMO-Type-I HARQ scheme exhibits the worst outage performance among the three MIMO-HARQ schemes for $\gamma=5\, \mathrm{dB}$. By increasing the transmit SNR of the MIMO-Type-I HARQ scheme from $5$ dB to $11.5$ dB, it can reach almost the same performance as the MIMO-HARQ-IR scheme at $5$ dB. Although the MIMO-HARQ-IR scheme has the lowest outage probability, it relies on larger buffer size and higher computational complexity compared to the MIMO-HARQ-CC scheme. Furthermore, as the number of transmit antennas $N_T$ increases, one can observe from Fig. \ref{fig:NT} that the MIMO-HARQ-CC scheme achieves almost the same performance as the MIMO-HARQ-IR scheme. This result is consistent with our analysis in Section \ref{Large}. Therefore, in MIMO-HARQ systems with large-scale antenna arrays (i.e., massive MIMO), the MIMO-HARQ-CC scheme could be the best choice due to its lower coding complexity and hardware requirement without significantly degrading the outage performance. \textcolor[rgb]{0.00,0.07,1.00}{Additionally, to understand the physical meaning of the modulation and coding gain, the case of $N_T=N_R=2$, $K=3$ and $R=3$~bps/Hz is taken as an example. From Fig. \ref{fig:R23}, compared to the MIMO-Type-I HARQ scheme, the MIMO-HARQ-IR scheme improves the modulation and coding gain at $R=3$~bps/Hz by roughly 6.5~dB. This observation agrees well with Fig. \ref{fig:NT}, where the MIMO-HARQ-IR scheme reduces the required SNR by $11.5-5=6.5$~dB to ensure the same outage as the MIMO-Type-I HARQ scheme. Nonetheless, the obtained modulation and coding gain for the MIMO-HARQ-CC scheme  (i.e., \eqref{eqn:coding_gain}) is a lower bound for its actual value, because the result is established by relying on the upper bound of the outage probability.}
\begin{figure}
    \centering
    \includegraphics[width=3.5in]{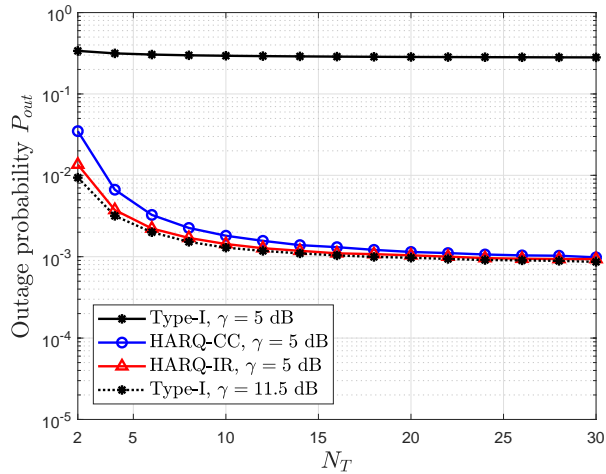}
    \caption{The outage probability $P_{out}$ versus the number of transmission antenna $N_{T}$ with $N_R=2$ and $K=3$. }\label{fig:NT}
\end{figure}
\section{Conclusions}\label{sec:con}
In this paper, we have investigated the outage performance for MIMO-HARQ assisted V2V communications with keyhole effect. To compensate the performance degradation caused by the rank-deficiency of keyhole effect, three types of HARQ, i.e., Type-I, HARQ-CC, and HARQ-IR, have been employed to boost the transmission reliability. With the help of MGF and Mellin transform, we have derived the exact outage expressions for the MIMO-Type-I and MIMO-HARQ-IR schemes, and an outage upper bound has been derived for the MIMO-Type-CC scheme. Moreover, the asymptotic outage analyses have been conducted to obtain meaningful insights. In particular, we have proved that the three MIMO-HARQ schemes can achieve the same diversity order. However, it has been shown that only full time diversity order can be achieved, while full spatial diversity order is unreachable as compared to MIMO-HARQ systems without keyhole effect. 
Furthermore, in large-scale antenna systems, we have proved that the MIMO-HARQ-CC scheme can achieve almost the same performance as the MIMO-HARQ-IR scheme. In the end, the Monte Carlo simulations have validated the analytical outcomes.
\appendices
\section{Proof of Theorem \ref{The:asy_I}}\label{eqn:8}
It is obviously that the asymptotic behavior of (\ref{eqn:op_type_I}) under high SNR, i.e., $\gamma_k\to\infty$, corresponds to the behavior of the PDF (\ref{eqn:pdf_X_l}) at small $x$, i.e., $x\to 0$. Based on \cite[Eq. (10.30.2), Eq. (10.30.3)]{lozier2003nist}, the modified Bessel function for $x\to 0$ is asymptotic to
 \begin{equation}\label{eqn:asyp_Kv}
{K_\tau}\left( x \right) \simeq \left\{ {\begin{array}{*{20}{c}}
{\frac{1}{2}\Gamma \left( \tau \right){{\left( {\frac{1}{2}x} \right)}^{ - \tau}},}&{\tau > 0},\\
{ - \ln \left( x \right)},&{\tau = 0}.
\end{array}} \right.
\end{equation}
 By substituting the asymptotic expression (\ref{eqn:asyp_Kv}) into (\ref{eqn:pdf_X_l}), the asymptotic outage probability for MIMO-Type-I HARQ scheme can be written as
 \begin{align}
P_{out}^{Type-I} \simeq &\left\{ {\begin{array}{*{20}{c}}
\prod\limits_{k = 1}^K {\int\limits_0^{\frac{{{N_T}}}{{{\gamma _k}}}\left( {{2^R} - 1} \right)} {\frac{{2{x^{\left( {{N_T} + {N_R}} \right)/2 - 1}}\frac{1}{2}\Gamma \left( {\tau} \right){{\left( {\frac{1}{2}2\sqrt x } \right)}^{ - \tau}}}}{{\Gamma \left( {{N_T}} \right)\Gamma \left( {{N_R}} \right)}}dx} }, &{\textup{$\tau > 0$}},\\
{\left( { - 1} \right)^K}\prod\limits_{k = 1}^K {\int\limits_0^{\frac{{{N_T}}}{{{\gamma _k}}}\left( {{2^R} - 1} \right)} {\frac{{2{x^{{N_T} - 1}}\ln \left( {2\sqrt x } \right)}}{{{\Gamma ^{\rm{2}}}\left( {{N_T}} \right)}}dx} },&{\textup{$\tau=0$}}.
\end{array}} \right.
\end{align}
Accordingly, the following two cases are treated individually.
\subsection{$\tau > 0$}
For the case of $\tau > 0$, we have
  \begin{align}
  P_{out}^{Type-I}
 &\simeq \prod\limits_{k = 1}^K {\frac{{\Gamma \left( {\tau} \right)}{\left( {\frac{{{N_T}}}{{{\gamma _k}}}\left( {{2^R} - 1} \right)} \right)}^{\left( {{N_T} + {N_R}} \right)/2 -\tau/2}}{{\Gamma \left( {{N_T}} \right)\Gamma \left( {{N_R}} \right)}\left({\left( {{N_T} + {N_R}} \right)/2 - \tau/2}\right)}}.
 \end{align}
\subsection{$\tau = 0$} For the case of $\tau=0$, one has
  \begin{align}\label{eqn:27}
P_{out}^{Type-I}
&\mathop  \simeq
{\left( { - 1} \right)^K}\prod\limits_{k = 1}^K \frac{{2{{\left( {\frac{{{N_T}}}{{{\gamma _k}}}\left( {{2^R} - 1} \right)} \right)}^{{N_T}}}}}{{{\Gamma ^{\rm{2}}}\left( {{N_T}} \right)}}\int\limits_0^1 {{x^{{N_T} - 1}}\ln \left( {2\sqrt {\frac{{{N_T}}}{{{\gamma _k}}}\left( {{2^R} - 1} \right)x} } \right)dx} \notag\\
&\mathop \simeq \limits^{\left( a \right)} \prod\limits_{k = 1}^K {\frac{{{{\left( {\frac{{{N_T}}}{{{\gamma _k}}}\left( {{2^R} - 1} \right)} \right)}^{{N_T}}}\ln \left( {{\gamma _k}} \right)}}{{{\Gamma ^{\rm{2}}}\left( {{N_T}} \right)}}\int\limits_0^1 {{x^{{N_T} - 1}}dx} } \notag\\
 &= \prod\limits_{k = 1}^K {\frac{{{{\left( {\frac{{{N_T}}}{{{\gamma _k}}}\left( {{2^R} - 1} \right)} \right)}^{{N_T}}}\ln \left( {{\gamma _k}} \right)}}{{{N_T}{\Gamma ^{\rm{2}}}\left( {{N_T}} \right)}}},
\end{align}
where step (a) holds by using $-2\ln \left( {2\sqrt {{{{N_T}}}/{{\gamma _k}}\left( {{2^R} - 1} \right)x} } \right)\simeq \ln\left( {{\gamma _k}} \right)$.

As a consequence, the asymptotic outage probability for the MIMO-Type-I HARQ scheme is derived as (\ref{eqn:asy I}).
\section{Proof of Theorem \ref{The:pcdf_Y_K}}\label{eqn:CC}
It is obviously found that $X_{CC}$ is the sum of independent random variables, which motivate us to use the method of MGF. More specifically, the MGF of the PDF of ${X_{CC}}= \sum\nolimits_{k = 1}^K {{\gamma _k}/{N_T}}X_k$ can be derived as a product of the MGFs of the PDFs of the random variables $X_1,\cdots, X_K$ as
 \begin{align}\label{eqn:mgf_Y_k}
{M_{X_{CC}}}\left( s \right) & = \prod\limits_{k = 1}^K {\int\limits_0^\infty  {{e^{\frac{{{\gamma _k}}}{{{N_T}}}xs}}{f_{{X_k}}}\left( x \right)dx} } \notag \\
 &= {\left( {\frac{2}{{\Gamma \left( {{N_T}} \right)\Gamma \left( {{N_R}} \right)}}} \right)^K}\prod\limits_{k = 1}^K {\int\limits_0^\infty  {{x^{\left( {{N_T} + {N_R}} \right)/2 - 1}}{e^{\frac{{{\gamma _k}}}{{{N_T}}}xs}}{K_{\tau}}\left( {2\sqrt x } \right)dx} }.
\end{align}
 By using \cite[Eq. (6.643.3), Eq. (9.222.2)]{jeffrey2007table}, 
 the MGF of the PDF of $X_{CC}$ can be expressed as
 \begin{align}
{M_{X_{CC}}}\left( s \right)
 &= \left\{ {\begin{array}{*{20}{c}}
{\prod\limits_{k = 1}^K {\frac{{\int\limits_0^\infty  {{t^{{N_T} - 1}}{e^{ - t}}{{\left( {1 - \frac{{{\gamma _k}}}{{{N_T}}}st} \right)}^{ - {N_R}}}dt} }}{{\Gamma \left( {{N_T}} \right)}},} }&{{N_T}  \ge{N_R}},\\
{\prod\limits_{k = 1}^K {\frac{{\int\limits_0^\infty  {{t^{{N_R} - 1}}{e^{ - t}}{{\left( {1 - \frac{{{\gamma _k}}}{{{N_T}}}st} \right)}^{ - {N_T}}}dt} }}{{\Gamma \left( {{N_R}} \right)}},} }&{{N_T} < {N_R}}.
\end{array}} \right.
\end{align}
With the help of \cite[Eq. (9.211.4)]{jeffrey2007table}, we can obtain the closed-form expression for the MGF of the PDF of $X_{CC}$ as
\begin{align}\label{eqn:mgf_yk_rew1}
{M_{X_{CC}}}\left(s \right) =\left\{ {\begin{array}{*{20}{c}}
\prod\limits_{k = 1}^K {{{\left( -{\frac{{{N_T}}}{{{\gamma _ks}}}} \right)}^{{N_T}}}}{\Psi \left( {{N_T},\tau+1;-\frac{{{N_T}}}{{{\gamma _k}}}{s^{ - 1}}} \right)},&{{N_T}  \ge {N_R}},\\
\prod\limits_{k = 1}^K {{{\left(- {\frac{{{N_T}}}{{{\gamma _k}s}}} \right)}^{{N_R}}}}{\Psi \left( {{N_R},\tau+1;-\frac{{{N_T}}}{{{\gamma _k}}}{s^{ - 1}}} \right)} ,&{{N_T} < {N_R}},
\end{array}} \right.
\end{align}

where $\Psi \left( \alpha,\gamma;z \right)$ denotes the confluent hypergeometric function \cite[eq. (9.211.4)]{jeffrey2007table}.

Finally, by applying the inverse Laplace transform, the CDF of $X_{CC}$ can be obtained as 
\begin{align}\label{eqn:F_y_k_cond_rew}
{F_{X_{CC}}}\left( x \right) & ={\mathcal L^{ - 1}}\left\{ {{M_{{X_{CC}}}}\left( { - s} \right)} \right\} = \frac{1}{{2\pi {\rm i}}}\int\limits_{a - {\rm i}\infty }^{a + {\rm i}\infty } {\frac{{{M_{{X_{CC}}}}\left( { - s} \right)}}{s}{e^{sx}}ds}.
\end{align}
By substituting \eqref{eqn:mgf_yk_rew1} into \eqref{eqn:F_y_k_cond_rew}, the CDF of $X_{CC}$ can be obtained as (\ref{eqn:cdf_Yk}). We thus complete the proof.

\section{Proof of Theorem \ref{lem:op-cc}}\label{app:op-CC}
By using \cite[Eq. (13.2.16), Eq. (13.2.19)]{lozier2003nist} 
and ignoring the higher order terms, the confluent hypergeometric functions in (\ref{eqn:mgf_yk_rew1}) are asymptotically equal to
\begin{equation}\label{eqn:confluen_hyp_asy}
\Psi \left( {{N_T},\tau+1;\frac{{{N_T}}}{{{\gamma _k}}}{s^{ - 1}}} \right) \simeq\left\{ {\begin{array}{*{20}{c}}
{\frac{{\Gamma \left(\tau \right)}}{{\Gamma \left( {{N_T}} \right)}}{{\left( {\frac{{{N_T}}}{{{\gamma _k}}}{s^{ - 1}}} \right)}^{ - \tau}},}&{{N_T} > {N_R}},\\
{\frac{1}{{\Gamma \left( {{N_T}} \right)}}\ln \left( {{\gamma _k}} \right),}&{{N_T} = {N_R}},\\
\end{array}} \right.
\end{equation}
\begin{equation}\label{eqn:confluen_hyp_asy2}
\Psi \left( {{N_R},\tau+1;\frac{{{N_T}}}{{{\gamma _k}}}{s^{ - 1}}} \right) \simeq {\begin{array}{*{20}{c}}{\frac{{\Gamma \left( \tau \right)}}{{\Gamma \left( {{N_R}} \right)}}{{\left( {\frac{{{N_T}}}{{{\gamma _k}}}{s^{ - 1}}} \right)}^{ - \tau}},}&{{N_T} < {N_R}}.
\end{array}}
\end{equation}

Then, by putting (\ref{eqn:confluen_hyp_asy}) and (\ref{eqn:confluen_hyp_asy2}) into (\ref{eqn:F_y_k_cond_rew}) along with the help of inverse Laplace transform, we can easily arrive at 
(\ref{eqn:as_cc}). 
\section{Proof of Theorem \ref{lem:pcdf_Z_K}}\label{eqn:IR}
The Mellin transform of the PDF of $X_{IR}$ can be written as a product of the Mellin transforms of the PDFs of  ${1 + {{{\gamma _1}}}/{{{N_T}}}X_1}, \cdots, {1 + {{{\gamma _K}}}/{{{N_T}}}X_K}$. With this property, the Mellin transform w.r.t. $X_{IR}$ can be rewritten as
\begin{align}\label{eqn:mellin_trans_Z_k}
\left\{ {{\cal M}{f_{X_{IR}}}} \right\}\left( s \right)&=\int\limits_0^\infty  {{x^{s - 1}}{f_{X_{IR}}}\left( x \right)dx}=\prod\limits_{k = 1}^K {\int\limits_0^\infty  {{{\left( {1 + \frac{{{\gamma _k}}}{{{N_T}}}x} \right)}^{s - 1}}{f_{{X_k}}}\left( x \right)dx} }\buildrel \Delta \over =  \varphi \left( s \right).
\end{align}
By putting (\ref{eqn:pdf_X_l}) into (\ref{eqn:mellin_trans_Z_k}) and using \cite[Eq. (7.811.5)]{jeffrey2007table}, we can obtain the Mellin transform of the PDF of $X_{IR}$ as
 \begin{align} \label{eqn:Mellin}
 \varphi \left( s \right)
 = &\prod\limits_{k = 1}^K \frac{{{{\left( {\frac{{{\gamma _k}}}{{{N_T}}}} \right)}^{s - 1}}}}{{\Gamma \left( {{N_T}} \right)\Gamma \left( {{N_R}} \right)}}\int\limits_0^\infty  {{x^{ - 1}}{{\left( {x + \frac{{{N_T}}}{{{\gamma _k}}}} \right)}^{s - 1}}G_{0,2}^{2,0}\left( {\left. {\begin{array}{*{20}{c}}
 - \\
{{N_T},{N_R}}
\end{array}} \right|x} \right)dx}  \notag \\
 =& \prod\limits_{k = 1}^K {\frac{{G_{1,3}^{3,1}\left( {\left. {\begin{array}{*{20}{c}}
1\\
{1 - s,{N_T},{N_R}}
\end{array}} \right|\frac{{{N_T}}}{{{\gamma _k}}}} \right)}}{{\Gamma \left( {{N_T}} \right)\Gamma \left( {{N_R}} \right)\Gamma \left( {1 - s} \right)}}}.
 \end{align}
 According to \cite[Eq. (6)]{yang2021asymptotic}, the CDF of $X_{IR}$ can be derived by invoking inverse Mellin transform as
 \begin{align}\label{eqn:cdf_IR_inverse}
{F_G}\left( x \right)&=  \left\{{\mathcal M} ^{ - 1}{\left[ - \frac{1}{s}\varphi \left( {s + 1} \right)\right]}\right\}\left( x \right)=\frac{{  1}}{{2\pi {\rm{i}}}}\int\nolimits_{c - {\rm i}\infty }^{c + {\rm i}\infty } {\frac{{{x^{ - s}}}}{-s}\varphi \left( {s + 1} \right)ds}.
\end{align}
By substituting (\ref{eqn:Mellin}) into (\ref{eqn:cdf_IR_inverse}), the CDF of $X_{IR}$ can finally be obtained as (\ref{eqn:CDF_Z_k_der}).

 \section{Proof of  Theorem \ref{The:IR_2}}\label{app:IR_2}
In order to derive the asymptotic expression of the MIMO-HARQ-IR scheme, by following \cite[Eq. 9.303]{jeffrey2007table}, the Meijer G-function of (\ref{eqn:CDF_Z_k_der}) can be expanded as
\begin{align}\label{eqn:meijer_g_expan}
&G_{1,3}^{3,1}\left( {\left. {\begin{array}{*{20}{c}}
1\\
{s,{N_T},{N_R}}
\end{array}} \right|\frac{{{N_T}}}{{{\gamma _k}}}} \right)\notag\\
&=\Gamma \left( {{N_T} - s} \right)\Gamma \left( {{N_R} - s} \right)\Gamma \left( {2 + s} \right){\left( {\frac{{{N_T}}}{{{\gamma _k}}}} \right)^{ s}}{}_1{F_2}\left( {s;1 + s - {N_T},1 + s - {N_R};{{\left( { - 1} \right)}^{ - 3}}\frac{{{N_T}}}{{{\gamma _k}}}} \right)\notag \\
&+ \Gamma \left( {s - {N_T}} \right)\Gamma \left( {{N_R} - {N_T}} \right)\Gamma \left( {{N_T}} \right){\left( {\frac{{{N_T}}}{{{\gamma _k}}}} \right)^{{N_T}}}{}_1{F_2}\left( {{N_T};{N_T} + 1 - s,1 + {N_T} - {N_R};{{\left( { - 1} \right)}^{ - 3}}\frac{{{N_T}}}{{{\gamma _k}}}} \right)\notag \\
&+ \Gamma \left( {s - {N_R}} \right)\Gamma \left( {{N_T} - {N_R}} \right)\Gamma \left( {{N_R}} \right){\left( {\frac{{{N_T}}}{{{\gamma _k}}}} \right)^{{N_R}}}{}_1{F_2}\left( {{N_R};1 + {N_R} - s,1 + {N_R} - {N_T};{{\left( { - 1} \right)}^{ - 3}}\frac{{{N_T}}}{{{\gamma _k}}}} \right),
\end{align}
where ${}_1F_2\left( \cdot;\cdot,\cdot;x \right)$ is a hypergeometric function  as \cite[Eq. 9.14.1]{jeffrey2007table}. By using the representation of the series expansion of the hypergeometric function and ignoring the higher order terms relative to $\gamma_k^{-1}$, the asymptotic expression of (\ref{eqn:meijer_g_expan}) for $N_T<N_R$ and $N_T>N_R$ can be obtained as
\begin{equation}\label{eqn:meijer_g_expan_asy}
G_{1,3}^{3,1}\left( {\left. {\begin{array}{*{20}{c}}
1\\
{s,{N_T},{N_R}}
\end{array}} \right|\frac{{{N_T}}}{{{\gamma _k}}}} \right) \simeq \left\{ {\begin{array}{*{20}{c}}
\Gamma \left( {s - {N_T}} \right)\Gamma \left( \tau \right)\Gamma \left( {{N_T}} \right){\left( {\frac{{{N_T}}}{{{\gamma _k}}}} \right)^{{N_T}}},&{N_T<N_R},\\
 \Gamma \left( {s - {N_R}} \right)\Gamma \left( \tau \right)\Gamma \left( {{N_R}} \right){\left( {\frac{{{N_T}}}{{{\gamma _k}}}} \right)^{{N_R}}},&{N_T>N_R}.
\end{array}} \right.
\end{equation}

Moreover, for the case of $N_T=N_R$, by utilizing the residue theorem, the Meijer G-function in (\ref{eqn:CDF_Z_k_der}) can be rewritten as
\begin{align}\label{eqn:meijer_G_asymp1}
G_{1,3}^{3,1}\left( {\left. {\begin{array}{*{20}{c}}
1\\{ s,{N_T},{N_R}}
\end{array}} \right|\frac{{{N_T}}}{{{\gamma _k}}}} \right)
=& \frac{1}{{2\pi {\rm i}}}\int\limits_{\mathcal L} {\Gamma \left( {s - t} \right){\Gamma ^2}\left( {{N_R} - t} \right)\Gamma \left( t \right){{\left( {\frac{{{N_T}}}{{{\gamma _k}}}} \right)}^t}dt} \notag \\
\simeq& - {\rm{Res}}\left\{ {\Gamma \left( {s - t} \right){\Gamma ^2}\left( {{N_R} - t} \right)\Gamma \left( t \right){{\left( {\frac{{{N_T}}}{{{\gamma _k}}}} \right)}^t}} , t = {N_R}\right\} \notag \\
=& {\left. { - \frac{d}{{dt}}\left( {{{\left( {t - {N_R}} \right)}^2}\Gamma \left( {s - t} \right){\Gamma ^2}\left( {{N_R} - t} \right)\Gamma \left( t \right){{\left( {\frac{{{N_T}}}{{{\gamma _k}}}} \right)}^t}} \right)} \right|_{t = {N_R}}}, \notag \\
 \end{align}
where the first step holds by using the definition of Meijer G-function \cite[Eq. (9.301)]{jeffrey2007table} and ignoring the higher order terms. Thus we can obtain the asymptotic expression for the case of $N_T=N_R$ as
 \begin{align}\label{eqn:meijer_G_asymp2}
  G_{1,3}^{3,1}\left( {\left. {\begin{array}{*{20}{c}}
1\\{ s,{N_T},{N_R}}
\end{array}} \right|\frac{{{N_T}}}{{{\gamma _k}}}} \right)
=&  - {\left. {\frac{d}{{dt}}\left( {\Gamma \left( { s - t} \right){\Gamma ^2}\left( {{N_R} - t + 1} \right)\Gamma \left( t \right)} \right)} \right|_{t = {N_R}}}{\left( {\frac{{{N_T}}}{{{\gamma _k}}}} \right)^{{N_R}}} \notag\\
 &-\Gamma \left( {s - {N_R}} \right){\Gamma ^2}\left( {{N_R} - {N_R} + 1} \right)\Gamma \left( {{N_R}} \right){\left( {\frac{{{N_T}}}{{{\gamma _k}}}} \right)^{{N_R}}}\ln \left( {\frac{{{N_T}}}{{{\gamma _k}}}} \right) \notag\\
\simeq  & \Gamma \left( {s - {N_R}} \right)\Gamma \left( {{N_R}} \right){\left( {\frac{{{N_T}}}{{{\gamma _k}}}} \right)^{{N_R}}}\ln {{{\gamma _k}}} .
\end{align}
Finally, by substituting (\ref{eqn:meijer_g_expan_asy}) and (\ref{eqn:meijer_G_asymp2}) into (\ref{eqn:CDF_Z_k_der}) together with \cite[Eq. (9.301)]{jeffrey2007table}, we can obtain the asymptotic outage probability for the MIMO-HARQ-IR scheme as (\ref{eqn:as_IR}).
\section{Proof of Theorem \ref{the:las}}\label{app:Le_2}
To prove \eqref{eqn:cov}, it suffices to show that
\begin{equation}\label{eqn:covsuff}
{\det \left( {{{\bf{I}}_{{N_T}}} + \sum\limits_{k = 1}^K {{\alpha _k}{{\bf{v}}_k}{{\bf{v}}_k}^{\rm{H}}} } \right)} \xrightarrow[{N_T \to \infty }]{a.s.} \prod\limits_{k = 1}^K {\left( {1 + {\alpha _k}{{\left\| {{{\bf{v}}_k}} \right\|}^2}} \right)},
\end{equation}
where $\xrightarrow{a.s.}$ denotes ``almost sure convergence'' and $\alpha_{k}={\gamma_k}/{N_T}{{\left\| {{{\bf{u}}_k}} \right\|}^2},~k\in\{1,\cdots,K\}$. To proceed with the proof, we reuse the definition ${{\bf{A}}_n} = {{\bf{I}}_{{N_T}}} + \sum\nolimits_{k = 1}^n {{\alpha _k}{{\bf{v}}_k}{{\bf{v}}_k}^{\rm{H}}}$, where $n\in [0,K]$ and we stipulate ${{\bf{A}}_0}={{\bf{I}}_{{N_T}}}$.
Then, by substituting the definition of ${{\bf{A}}_n}$ into the left hand side of \eqref{eqn:covsuff}, we arrive at
\begin{align}\label{eqn:det_mutu_inf_log_re1}
\det \left( {{\bf{A}}_{K }} \right) &=\det \left( {{{\bf{I}}_{N_T}} + \sum\limits_{k = 1}^K {{\alpha _k}{{\bf{v}}_k}{{\bf{v}}_k}^{\rm{H}}} } \right) \notag\\
 &= \det \left( {{{\bf{A}}_{K - 1}}} \right)\left( 1 + {\alpha _K}{{\bf{v}}_K}^{\rm{H}}{{\bf{A}}_{K - 1}}^{ - 1}{{\bf{v}}_K}\right)\notag\\
&=\det \left( {{{\bf{A}}_{K - 1}}} \right)\left( 1 + {\alpha _K}{{\bf{v}}_K}^{\rm{H}}{{\bf{A}}_{K - 2}}^{ - 1}{{\bf{v}}_K} - {\alpha _K}{\alpha _{K - 1}}\frac{{{{\left| {{{\bf{v}}_K}^{\rm{H}}{{\bf{A}}_{K - 2}}^{ - 1}{{\bf{v}}_{K - 1}}} \right|}^2}}}{{1 + {\alpha _{K - 1}}{{\bf{v}}_{K - 1}}^{\rm{H}}{{\bf{A}}_{K - 2}}^{ - 1}{{\bf{v}}_{K - 1}}}} \right),
\end{align}
where the first step holds by using $\mathrm{det}(\mathbf{I}+\mathbf{A}\mathbf{B})=\mathrm{det}(\mathbf{I}+\mathbf{B}\mathbf{A})$, and the second step holds by using the following recursive relationship between ${{\bf{A}}_{K-1}}$ and ${{\bf{A}}_{K-2}}$,
\begin{align}\label{eqn:A_k_inverse}
{\bf{A}}_{K - 1}^{ - 1}&= {\left( {{{\bf{A}}_{K - 2}} + {\alpha _{K - 1}}{{\bf{v}}_{K - 1}}{{\bf{v}}_{K - 1}}^{\rm{H}}} \right)^{ - 1}} \notag\\
& = {{\bf{A}}_{K - 2}}^{ - 1} - \frac{{{\alpha _{K - 1}}{{\bf{A}}_{K - 2}}^{ - 1}{{\bf{v}}_{K - 1}}{{\bf{v}}_{K - 1}}^{\rm{H}}{{\bf{A}}_{K - 2}}^{ - 1}}}{{1 + {\alpha _{K - 1}}{{\bf{v}}_{K - 1}}^{\rm{H}}{{\bf{A}}_{K - 2}}^{ - 1}{{\bf{v}}_{K - 1}}}},
\end{align}
and \eqref{eqn:A_k_inverse} is obtained by capitalizing on the Woodbury matrix identity. Repeatedly using the Woodbury matrix identity for ${\bf{A}}_{n}$ leads to
%
 \begin{align}\label{eqn:det_mutu_inf_log_re2}
&\det \left( {{{\bf{I}}_{N_T}} + \sum\limits_{k = 1}^K {{\alpha _k}{{\bf{v}}_k}{{\bf{v}}_k}^{\rm{H}}} } \right) \notag\\
&= \det \left( {{{\bf{A}}_{K - 1}}} \right)\left( 1 + {\alpha _K}{{\bf{v}}_K}^{\rm{H}}{{\bf{A}}_0}^{ - 1}{{\bf{v}}_K} - {\alpha _K}\sum\limits_{k = 1}^{K - 1} {{\alpha _k}\frac{{{{\left| {{{\bf{v}}_K}^{\rm{H}}{{\bf{A}}_{k - 1}}^{ - 1}{{\bf{v}}_k}} \right|}^2}}}{{1 + {\alpha _k}{{\bf{v}}_k}^{\rm{H}}{{\bf{A}}_{k - 1}}^{ - 1}{{\bf{v}}_k}}}}   \right)\notag\\
& = \det \left( {{{\bf{A}}_{K - 1}}} \right)\left( 1 + {\alpha _K}{{\left\| {{{\bf{v}}_K}} \right\|}^2}- {\alpha _K}\sum\limits_{k = 1}^{K - 1} {{\alpha _k}\frac{{{{\left| {{{\bf{v}}_K}^{\rm{H}}{{\bf{A}}_{k - 1}}^{ - 1}{{\bf{v}}_k}} \right|}^2}}}{{1 + {\alpha _k}{{\bf{v}}_k}^{\rm{H}}{{\bf{A}}_{k - 1}}^{ - 1}{{\bf{v}}_k}}}}  \right).
\end{align}
Since ${{\bf{A}}_1},\cdots, {{\bf{A}}_{K-1}}$ are positive definite matrices with bounded spectral norm, as $N_T\to\infty$, it follows by using \cite[Theorems 3.4, 3.7]{couillet2011random} that
\begin{equation}\label{eqn:det_eq}
\frac{1}{{{N_T}}}{{\bf{v}}_K}^{\rm{H}}{{\bf{A}}_{k - 1}}^{ - 1}{{\bf{v}}_k}\xrightarrow[{N_T \to \infty }]{a.s.} 0,\, k\ne K,
\end{equation}
\begin{equation}\label{eqn:det_eq1}
\frac{1}{{{N_T}}}{{\bf{v}}_k}^{\rm{H}}{{\bf{A}}_{k - 1}}^{ - 1}{{\bf{v}}_k}\xrightarrow[{N_T \to \infty }]{a.s.} \frac{1}{N_T}{\rm tr}({{\bf{A}}_{k - 1}}^{ - 1}).
\end{equation}
Plugging \eqref{eqn:det_eq} and \eqref{eqn:det_eq1} into \eqref{eqn:det_mutu_inf_log_re2} results in
\begin{align}\label{eqn:det_mutu_inf_log_re3}
\det \left( {{\bf{A}}_{K }} \right) &=\det \left( {{{\bf{I}}_{N_T}} + \sum\limits_{k = 1}^K {{\alpha _k}{{\bf{v}}_k}{{\bf{v}}_k}^{\rm{H}}} } \right)\notag\\
&\xrightarrow[{N_T \to \infty }]{a.s.} \det \left( {{{\bf{A}}_{K - 1}}} \right)\left( {1 + {\alpha _K}{{\left\| {{{\bf{v}}_K}} \right\|}^2}} \right).
\end{align}

By repeatedly applying the steps \eqref{eqn:det_mutu_inf_log_re1}-\eqref{eqn:det_mutu_inf_log_re3} to $\det \left( {{{\bf{A}}_{K - 1}}} \right)$, the deterministic equivalent of $\det \left( {{{\bf{A}}_{K - 1}}} \right)$ can be obtained as the recursive relationship in \eqref{eqn:det_mutu_inf_log_re3}. By recursively using \eqref{eqn:det_mutu_inf_log_re3}, we finally complete the proof.

\bibliographystyle{ieeetran}
\bibliography{manuscript_1}
\end{document}